\definecolor{shadecolor}{rgb}{0.92,0.92,0.92}
\theoremstyle{definition}
\newtheorem{assumption}{Assumption}
\newtheorem{theorem}{Theorem}
\newtheorem{definition}{Definition}
\newtheorem{lemma}{Lemma}
\newtheorem{remark}{Remark}
\newcommand{\vast}{\bBigg@{3.2}}
\newcommand{\Vast}{\bBigg@{4.5}}
\pretocmd\@bibitem{\color{black}\csname keycolor#1\endcsname}{}{\fail}
\newcommand\citecolor[1]{\@namedef{keycolor#1}{\color{blue}}}
\begin{document}
%\title{
%	Parallel MPC based on ADMM for Collision Avoidance Problem with Application to the Multi-Agent Formation Control}

%\title{
%	 ADMM-Based Parallel Collision-free Model Predictive Control Method for Multi-Agent Systems}
\title{
Alternating Direction Method of Multipliers-Based Parallel Optimization for Multi-Agent Collision-Free Model Predictive Control} 
\author{
	Zilong Cheng, %\IEEEmembership{Student Member,~IEEE,}
	Jun Ma, %\IEEEmembership{Member,~IEEE,}
	Wenxin Wang, 
 Zicheng Zhu, \\%\IEEEmembership{Student Member,~IEEE,}\\
	Clarence W. de Silva, \IEEEmembership{Life Fellow,~IEEE,}
	and Tong Heng Lee	
%%%%
%%%%
%%%%	\thanks{Zilong Cheng, Wenxin Wang, and Tong Heng Lee are with the Integrative Sciences and Engineering Programme, NUS Graduate School, National University of Singapore, Singapore 119077 (e-mail:zilongcheng@u.nus.edu; eleleeth@nus.edu.sg).}
\thanks{This work was supported in part by the National Natural Science Foundation of China under Grant 62303390; and in part by the Project of Hetao Shenzhen-Hong Kong Science and Technology Innovation Cooperation Zone under Grant HZQB-KCZYB-2020083. \textit{(Corresponding Author: Jun Ma.)}}
	\thanks{Zilong Cheng, Wenxin Wang, Zicheng Zhu, and Tong Heng Lee are with the Department of Electrical and Computer Engineering; National University of Singapore, Singapore 119077 (e-mail: zilongcheng@u.nus.edu; wenxin.wang@u.nus.edu;  zhuzicheng@u.nus.edu; eleleeth@nus.edu.sg).}
%%%%
	% \thanks{Jun Ma is with the Robotics and Autonomous Systems Thrust and the Department of Electronic and Computer Engineering, The Hong Kong University of Science and Technology, China (e-mail: jun.ma@ust.hk).}
  \thanks{Jun Ma is with the Robotics and Autonomous Systems Thrust, The Hong Kong University of Science and Technology (Guangzhou), Guangzhou, China, also with the Division of Emerging Interdisciplinary Areas, The Hong Kong University of Science and Technology, Hong Kong SAR, China, and also with the HKUST Shenzhen-Hong Kong Collaborative Innovation Research Institute, Futian, Shenzhen, China (e-mail: jun.ma@ust.hk).}
	\thanks{Clarence W. de Silva is with the Department of Mechanical Engineering, University of British Columbia, Vancouver, BC, Canada V6T 1Z4 (e-mail: desilva@mech.ubc.ca).}
\thanks{This work has been submitted to the IEEE for possible publication. Copyright may be transferred without notice,	after which this version may no longer be accessible.}}

	\markboth{}
{Z. Cheng \MakeLowercase{\textit{et al.}}}

\maketitle 

\begin{abstract} 
%This paper investigates the multi-agent collision-free control problem for medium and large-scale systems. 
This paper investigates the collision-free control problem for multi-agent systems. For such multi-agent systems, it is the typical situation where conventional methods using either the usual centralized model predictive control (MPC), or even the distributed counterpart, would suffer from substantial difficulty in balancing optimality and computational efficiency. Additionally, the non-convex characteristics that invariably arise in such collision-free control and optimization problems render it difficult to effectively derive a reliable solution (and also to thoroughly analyze the associated convergence properties). To overcome these challenging issues, this work establishes a suitably novel parallel computation framework through an innovative mathematical problem formulation; and then with this framework and formulation, a parallel algorithm based on alternating direction method of multipliers (ADMM) is presented to solve the sub-problems arising from the resulting parallel structure. Furthermore, an efficient and intuitive initialization procedure is developed to accelerate the optimization process, and the optimum is thus determined with significantly improved computational efficiency. As supported by rigorous proofs, the convergence of the proposed ADMM iterations for this non-convex optimization problem is analyzed and discussed in detail. Finally, a simulation with a group of unmanned aerial vehicles (UAVs) serves as an illustrative example here to demonstrate the effectiveness and efficiency of the proposed approach. 
Also, the simulation results verify significant improvements in accuracy and computational efficiency compared to other baselines, including primal quadratic mixed integer programming (PQ-MIP), non-convex quadratic mixed integer programming (NC-MIP), and non-convex quadratically constrained quadratic programming (NC-QCQP).
\end{abstract}

\def\abstractname{Impact Statement}
%%LL%%
%%LL%%
\begin{abstract}
\textcolor{black}{Model predictive control (MPC) has demonstrated promising effectiveness in addressing collision-free control problems for multi-agent systems.
%Model predictive control (MPC) has demonstrated promising effectiveness in addressing multi-agent collision-free control problems for medium and large-scale systems. 
However, the inherent non-convexity of optimization problems, computational burden, and performance considerations pose additional challenges to the MPC problem.
%To overcome these challenges, the alternating direction method of multiplier (ADMM) serves as an effective optimization technique, which decomposes large-scale problems into smaller and more manageable sub-problems. 
To overcome these challenges, the alternating direction method of multiplier (ADMM) serves as an effective optimization technique, which decomposes complex multi-agent collision-free control problems into smaller and more manageable sub-problems. 
With thus this motivation, this work establishes a suitably novel parallel computation framework through an innovative mathematical problem formulation; and then with this framework and formulation, a novel and innovative parallel algorithm based on ADMM is presented to solve the sub-problems arising from the resulting parallel structure.
Furthermore, an efficient and intuitive initialization procedure is developed to accelerate the optimization process, and the optimum is thus determined with significantly improved computational efficiency. 
As supported by rigorous proofs, the convergence of the proposed ADMM iterations for this non-convex optimization problem is analyzed and discussed in detail.
Finally, simulation results clearly verify the effectiveness of the proposed methodologies here; and also showcase the potential for the deployment of these advanced optimization methods to solve the multi-agent collision-free control problem.}
\end{abstract}

\begin{IEEEkeywords}
	Multi-agent system, collision avoidance, parallel computation, model predictive control (MPC), alternating direction method of multipliers (ADMM), non-convex optimization.
\end{IEEEkeywords}

\section{Introduction}

The study on multi-agent collision-free control problems has greatly progressed in recent times, and there already exists a significant corpus of substantial works with that central idea on addressing the control objective of a group of agents to be realized appropriately~\cite{10083278, 9919337, xue2023distributed, 9133452}. %It is also certainly evident that it would be highly desirable to achieve the control objectives in these medium and large-scale collision-free control problems suitably effectively, and various approaches have been proposed in the existing literature~\cite{9928314}.
\textcolor{black}{It is also certainly evident that it would be highly desirable to achieve the control objectives in these multi-agent collision-free control problems suitably effectively, and various approaches have been proposed in the existing literature~\cite{9928314}.}
%One commonly used method is the model predictive control (MPC), which has shown promising effectiveness in solving the aforementioned medium and large-scale collision-free control problems with various dynamic constraints and objective functions. 
\textcolor{black}{One commonly used method is the model predictive control (MPC), which has shown promising effectiveness in solving the aforementioned multi-agent collision-free control problems with various dynamic constraints and objective functions.}
Typically, the MPC algorithm can be implemented in centralized or distributed manners. In the centralized MPC~\cite{kayacan2014learning}, the dynamics of all agents, the control objectives, and the environment are modeled as a whole system, and the global optimal control action for each agent is centrally calculated and allocated. One of the main issues of this centralized manner is that the computation burden will be significantly aggravated when the amount of agents is increasing. As a counterpart, distributed algorithms~\cite{9430775,9745747,9464745} \textcolor{black}{are implemented to deal with the aforementioned issue,} with the purpose of formulating a local optimization problem from the global optimization target. It is pertinent to note that, with the distributed algorithms, each agent can explore the information of the neighbor agents and make the control decision independently. Therefore, from the viewpoint of the entire system, all agents can be deployed in a distributed manner, which facilitates effective computation in a parallel fashion. 
However, in such distributed MPC problems, even though the framework exists where the optimum can certainly be reached via coordinating each distributed local optimization problem, there is nevertheless no guarantee that this global optimality is always attained. Tremendous further efforts have thus been dedicated to minimizing the objective gap between the distributed local optimization problems and the global optimization problem~\cite{9254131,9403920}; and additionally, it is revealed in these studies that the performance of the distributed MPC system could suffer significantly when compared to the centralized MPC system approach.

%To cater to the aforementioned drawbacks, a significant amount of newer effort has been put into improving the computational efficiency of the optimization schemes for these medium and large-scale centralized systems, and several more recent works have demonstrated the attainment of satisfying performance. 
\textcolor{black}{To cater to the aforementioned drawbacks, a significant amount of newer effort has been put into improving the computational efficiency of the optimization schemes for these centralized multi-agent systems, and several more recent works have demonstrated the attainment of satisfying performance.}
One particularly noteworthy direction that researchers have explored focuses on effective and novel problem formulation/re-formulation methodologies~\cite{7865986, he2018neural}. In this approach, an appropriate methodology is developed where it is shown that for a group of specific centralized MPC problems, the optimization problems can be readily formulated in a particular structure, and which can be solved rather efficiently by various effective existing solvers. For example, in~\cite{giselsson2013accelerated}, the dual decomposition technique is utilized to decentralize the multi-agent MPC problem, and then the accelerated proximal gradient (APG) method is used to solve the converted optimization problem in a parallel manner. Another possible way would be to improve the computational efficiency of the MPC solver. Along this line, it is certainly a well-known notion that the traditional MPC problem is essentially a constrained quadratic optimization problem; thus various evident and potential improvements in computational efficiency are still possible (if the information in the optimization problem that has not been otherwise already excavated can be further sufficiently explored and utilized). 
\textcolor{black}{Examples of such efforts can be found in~\cite{wang2022distributed}, where distributed stochastic MPC is presented to reduce the online computational burden by 45\% compared with the centralized counterpart with both using the MOSEK solver. Also, in~\cite{le2020gaussian}, an efficient distributed framework with alternating direction method of multipliers (ADMM) for solving the Gaussian process-based distributed MPC is developed to reduce the computation time by 53\% with the IPOPT (through CasADi) serving as the baseline. However, no collision avoidance or other inequality constraints are involved in~\cite{le2020gaussian}, which are critical and significant impediments in multi-agent collision-free problems.}

%Essentially, recent studies on the newer routines possible in the ADMM~\cite{8710607, li2020distributed, luo2019non} point to new and promising possibilities for solving the medium and large-scale MPC problem. 
\textcolor{black}{Essentially, recent studies on the newer routines possible in the ADMM~\cite{8710607, li2020distributed, luo2019non} point to new and promising possibilities for solving the multi-agent MPC problem.}
%The principal idea of the ADMM is to decompose a large-scale problem into several small and more manageable sub-problems. With this decomposition, the complicated large-scale problem can be split into a couple of subproblems that can be solved in a separate framework efficiently. 
The principal idea of the ADMM is to decompose a multi-agent collision-free problem into several small and more manageable sub-problems. With this decomposition, the complicated problem can be split into a couple of subproblems that can be solved in a separate framework efficiently.
ADMM-based algorithms have been applied in many areas such as neural networks~\cite{9597482}, image processing~\cite{9416245}, optimal control~\cite{ma2020arxiv}, and general optimization problem solver~\cite{o2016conic}. Some similar works have been introduced in several existing works. In~\cite{ma2020alternating,zhang2021semi}, the ADMM has been suitably applied in autonomous driving to improve the computational efficiency in motion planning problems.
In~\cite{an2021flexible}, the authors provide a decentralized conflict resolution method for autonomous vehicles based on an extension to the ADMM, and in~\cite{braun2018hierarchical}, a distributed MPC method via ADMM is proposed for power networks system. \textcolor{black}{However, in both works, no non-convex constraints and parallel structure are considered in the conventional ADMM framework. In this sense, addressing the issue of non-convexity as typically encountered in multi-agent collision-free problems, while employing the ADMM with a parallel structure, remains an open and intriguing question.}

\textcolor{black}{With the above descriptions as a backdrop, 
the non-convexity of the optimization, the trade-off between optimality and computational efficiency, and control performance are major concerns in multi-agent collision-free control problems.}
%With the above descriptions as a backdrop, the non-convexity of the optimization, the trade-off between optimality and computational efficiency, and control performance are major concerns in multi-agent collision-free control problems, especially for medium and large-scale systems.
To overcome these challenges, this paper proposes an ADMM-based parallel optimization methodology for solving the centralized MPC problem with collision-avoidance constraints.
The merits present in both the distributed method and the centralized counterpart are thus suitably combined, while their major shortcomings are effectively avoided. 
Specifically, the ADMM is employed to solve the centralized collision avoidance MPC problem in a parallel manner, resulting in a significant improvement in the computational efficiency (over ten times faster than the log barrier method). 
Then, the centralized collision avoidance MPC problem is divided into two groups of sub-problems. 
The first group deals with the quadratic objective function, dynamic constraints, and box constraints for each UAV system. 
The second group deals with the single collision avoidance constraint. 
Leveraging the ADMM, the optimization problem is mathematically converted to a consensus optimization problem. 
Notably, the computational complexity of the first group of parallel sub-problems is much heavier than the computational complexity of the centralized sub-problem. Consequently, the centralized sub-problem with the collision avoidance constraint can be efficiently solved without considering the other constraints.
After the two groups of problems are solved in an iteration, the solutions are shared among all UAV systems, marking the commencement of the next iteration.

% The main contributions of this paper are listed as follows: (1) Through an appropriate and suitably innovative mathematical problem formulation, a parallel computation framework is constructed, which facilitates the deployment of advanced optimization methods to solve the centralized model predictive collision-free control problem efficiently. (2) With the parallelization structure, the ADMM algorithm is invoked to solve each sub-problem, which leads to significant improvement of the computational efficiency. (3) Despite the non-convexity of the optimization problem, an efficient and intuitive initialization procedure is developed to further accelerate the optimization process, where the optimum can be straightforwardly determined in some specific cases. Furthermore, the convergence of the proposed algorithm for the formulated non-convex optimization problem is analyzed with rigorous proofs. 
\textcolor{black}{The main contributions of this paper are listed as follows: (1) By introducing an innovative mathematical problem formulation, a parallel computation framework is established, enabling the efficient application of advanced optimization methods for solving the centralized model predictive collision-free control problem.
(2) Leveraging the parallelization structure, the proposed methodology employs the ADMM algorithm to solve each sub-problem, resulting in a significant improvement in computational efficiency.
(3) Despite the inherent non-convexity of the optimization problem, an efficient and intuitive initialization procedure is developed to expedite the optimization process, thereby facilitating straightforward determination of the optimum. Furthermore, rigorous proofs are provided to analyze the convergence of the proposed algorithm for the formulated non-convex optimization problem.}

The remainder of this paper is organized as follows. Section II formulates the multi-agent collision-free control problems. The dynamic constraints, and also the collision avoidance constraints, are characterized; and subsequently, a non-convex optimization problem is introduced. Section III proposes the use of the ADMM algorithm to solve the formulated problem in a parallel manner. In section IV, the convergence analysis of the proposed non-convex ADMM iterations is given. In Section V, a multi-agent system with a group of unmanned aerial vehicles (UAVs) is introduced as an example to test out (and thus validate) the proposed methodology; and comparisons between the existing methods and the proposed method are carried out with supporting simulations. Finally, pertinent conclusions of this work are provided in Section VI.

\section{Problem Statement}
\subsection{Notations}
Before formulating the main problem, we introduce the following notations at first, which will be used in the remaining text. $\operatorname{diag}(\cdot)$ denotes a block diagonal operator. $A^\top$ denotes a transpose of the matrix $A$. $\mathbb R^{m\times n}$  $(\mathbb R^n)$ denotes a real matrix with $m$ rows and $n$ columns (real column vector with the dimension $n$). $\mathbb S^n$ denotes a symmetric matrix space with the dimension $n$. $\mathbb S^n_+$ $(\mathbb S_{++}^n)$ denotes a real positive semi-definite (positive definite) symmetric matrix with the dimension $n$. Vector combination notation with respect to column is denoted by $(\cdot)$, i.e., $a = (a_1,a_2,\dotsm,a_N)=\left[a_1^\top,a_2^\top,\dotsm,a_N^\top\right]^\top$. Operator combination notation is also represented by $(\cdot)$, such as the operator $f$ including $f_1,f_2,\dotsm,f_N$ can be denoted by $f=(f_1,f_2,\dotsm,f_N)$. $I_n$ represents the identity matrix with the dimension $n\times n$. The operator $\operatorname{Tr}(A)$ denotes a trace of the square matrix $A$. The operator $\langle A, B \rangle$ denotes a Frobenius inner product, i.e., $\langle A,B\rangle= \operatorname{Tr}\left(A^\top B\right)$ for all $A,B \in \mathbb R^{m\times n}$. The norm operator based on the inner product operator is defined by $\|x\|=\sqrt{\langle x,x\rangle}$ for all $x\in \mathbb R^{m\times n}$. The operator $\|x\|_1$ denotes the $\ell_1$ norm of the vector $x$. $\succeq$ denotes the generalized inequality symbol (element-wisely greater or equal than).

\subsection{Dynamic Constraints}
\textcolor{black}{In this subsection, the formulation of the multi-agent MPC problem is given. In the MPC framework, the control problem is formulated as an optimization problem subject to the dynamic constraints and other various constraints, }
%%AA
%%AA
\textcolor{black}{where the obtained optimal solution is a sequence of control inputs at each time stamp.} 
%%AA
To solve the MPC problem, a precise system model is usually necessary for improving the control performance. Thus, the dynamic constraints are the essential part of an MPC controller.

 Before analyzing the multi-agent system, we investigate the problem of the single agent system as first. Here, for all $\tau=t,t+1,\dotsm,t+T-1$, the linear time-variant (LTI) system dynamics can be represented by
\begin{IEEEeqnarray}{rl}\label{eq:system}
x(\tau+1)=Ax(\tau)+Bu(\tau),
\end{IEEEeqnarray}
where $A\in\mathbb R^{n\times n}, B\in\mathbb R^{n\times m}$ are the system state and control input matrices, respectively; $x(\tau)\in\mathbb R^n,u(\tau)\in\mathbb R^{m}$ are the system state and control input vectors, respectively; $t$ is the initial time stamp; $T$ represents the prediction horizon. We assume that the weighting parameters in terms of the system state and system input variables are time-invariant. To track a given reference signal, the tracking optimization problem of each individual agent with a given quadratic objective function can be formulated as
\begin{IEEEeqnarray*}{l}\label{eq:opt_individual_1}
\min_{x(\tau),u(\tau)} \quad
\big(x(t+T)-r(t+T)\big)^\top\hat Q_T\big(x(t+T)-r(t+T)\big)\\
+\displaystyle\sum_{\tau =t}^{t+T-1} \Big(\big(x(\tau)-r(\tau)\big)^\top\hat Q\big(x(\tau)-r(\tau)\big)+u(\tau)^\top\hat Ru(\tau)\Big)\\
\operatorname{subject\ to} \quad x(\tau+1)=Ax(\tau)+Bu(\tau)\\
\,\,\,\,\quad\quad\quad\quad\quad\tau=t,t+1,\dotsm,t+T-1, \IEEEyesnumber
\end{IEEEeqnarray*}
where $\hat Q\in\mathbb S^{n}_+$ and  $\hat R\in\mathbb S^{m}_{+}$ are the weighting matrices for the system state and system input variables, respectively; $r(\tau)$ is the reference signal at the time $\tau$. 

%The dynamic constraints of the system can be further denoted by
%\begin{IEEEeqnarray*}{rCl}
%x(t+1)&=&Ax(t)+Bu(t)\\
%x(t+2)&=&A^2x(t)+ABu(t)+Bu(t+1)\\
%x(t+3)&=&A^3x(t)+A^2Bu(t)+ABu(t+2)+Bu(t+2)\\
%&\vdots&\\
%x(t+T)&=&A^\topx(t)+A^{T-1}Bu(t)\\
%&&+A^{T-2}Bu(t+1)+\dotsm+Bu(t+T-1).\yesnumber
%\end{IEEEeqnarray*}

Since the initial system state variable $x(t)$ cannot be influenced by the optimization result, the optimization variables $x$ and $u$ in terms of the system state and system input variables, respectively, are defined as
\begin{IEEEeqnarray*}{rCl}
x&=&\big(x(t+1),x(t+2),\dotsm,x(t+T)\big)\in \mathbb R^{nT}\\
u&=&\big(u(t),u(t+1),\dotsm,u(t+T-1)\big)\in\mathbb R^{mT}, \yesnumber
\end{IEEEeqnarray*}
and the initial state variable is redefined as $x^t=x(t)$ to eliminate the confusion of notation in the remaining text. Then, it follows that
\begin{IEEEeqnarray*}{l}
x=Gx^t+Hu,\yesnumber
\end{IEEEeqnarray*}
where
\begin{IEEEeqnarray*}{rCl}
G&=&\Big(A,A^2,\dotsm,A^{T-1},A^T\Big)\in\mathbb R^{nT\times n}\\
H&=&\begin{bmatrix}
B&0&0&\dotsm&0\\
AB&B&0&\dotsm&0\\
A^2B&AB&B&\dotsm&0\\
\vdots&\vdots&\vdots&\ddots&\vdots\\
A^{T-1}B&A^{T-2}B&A^{T-3}B&\dotsm&B
\end{bmatrix}\in\mathbb R^{nT\times mT}.\\\yesnumber
\end{IEEEeqnarray*}
Then, the problem~\eqref{eq:opt_individual_1} can be converted into a compact form:
\begin{IEEEeqnarray*}{rl}\label{eq:opt_individual_2}
\min_{x,u} \quad
& (x-r)^\top Q(x-r)+u^\top Ru\\
\operatorname{subject\ to} \quad &
x=Gx^t+Hu,\yesnumber
\end{IEEEeqnarray*}
where
\begin{IEEEeqnarray*}{rCl}
r&=&\big(r(t+1),r(t+2),\dotsm,r(t+T)\big)\in\mathbb R^{nT}\\
Q&=&\operatorname{diag}\Big(\underbrace{\hat Q,\hat Q,\dotsm,\hat Q}_{T-1},
\hat Q_T \Big)\in\mathbb R^{nT\times nT}\\
R&=&\operatorname{diag}\Big(\underbrace{\hat R,\hat R,\dotsm,\hat R}_{T}\Big)\in\mathbb R^{mT\times mT}.\yesnumber
\end{IEEEeqnarray*}

Additionally, as an inevitably encountered challenge, most of the systems in the real world have the characteristic of the dead zone, but it is usually not represented in the system model because of the challenge to identify. When the system model without the dead zone is utilized in the MPC problem, some terms of the obtained control input signal in small value will be covered by the dead zone in practical use and the performance will be thus diminished. To avoid this issue, a lasso term with regard to control input signal $\Vert u\Vert_1$ is added in the objective function to penalize the sparsity of the control input signal. As a result, if the magnitude of a control input signal is very small, then adding a lasso term to the objective function can force these small control input signals to zero, and thus the performance of the optimization results will be further improved~\cite{pakazad2013sparse}. It is also pertinent to note that, compared to the quadratic regularization term $\Vert u\Vert_2^2$, the lasso term can perfectly achieve this target. Even though the quadratic regularization term can reduce the value of the control input, it cannot force the small values to be zeros.

Besides, as a common practice, the limits of the system state variables and system input variables are also taken into consideration, which are represented as box constraints in the MPC problem. Then, it follows that the problem~\eqref{eq:opt_individual_2} can be generalized and represented by
\begin{IEEEeqnarray*}{rl} \label{eq.mpc1}
\min_{x,u} \quad
& (x-r)^\top Q(x-r)+u^\top Ru+\alpha\|u\|_1\\
\operatorname{subject\ to} \quad
&x=Gx^t+Hu\\
&x\in \mathcal X,u\in\mathcal U,\yesnumber
\end{IEEEeqnarray*}
where $\mathcal X$ and $\mathcal U$ denote the box constraints of the system state and system input variables, respectively. The box constraints can be denoted by $\mathcal X=\{x\in\mathbb R^{nT}\;|\; \underline x\preceq x\preceq \overline x\}$, where $\underline x\in\mathbb R^{nT},\, \overline x\in\mathbb R^{nT}$ denote the minimum value and maximum value of the system state variables, respectively; $\mathcal U=\{u\in\mathbb R^{mT}\,|\, \underline u\preceq u\preceq \overline u\}$, where $\underline u\in\mathbb R^{mT},\, \overline u\in\mathbb R^{mT}$ denote the minimum value and maximum value of the system input variables, respectively; $\alpha$ is the weighting parameter with respect to the lasso term.

To further simplify the optimization problem, we introduce the definition of the indicator function. Specifically, the nonlinear constraints in the optimization problem could be written into indicator functions; and thus these constraints could be included in the objective function. And then, the optimization problem can be further split into two groups of sub-problems. This effect will be introduced in the next section in detail. At this stage, we give the definition of the indicator function first.
\begin{definition}
	The indicator function with respect to a cone $\mathbb B$ is defined as
	\begin{IEEEeqnarray*}{l}
	\delta_\mathbb B(B)=\left\{\begin{array}{ll} 0 & \text{if } B\in \mathbb B\\
	\infty & \text{otherwise}.
	\end{array}\right.\yesnumber
	\end{IEEEeqnarray*}
\end{definition}

By using the indicator function, we can convert the optimization problem~\eqref{eq.mpc1} to
\begin{IEEEeqnarray*}{rl}
\min_{x,u} \quad
& (x-r)^\top Q(x-r)+\delta_\mathcal X(x)+u^\top Ru\\
&+\delta_\mathcal U(u)+\alpha\|u\|_1\\
\operatorname{subject\ to} \quad
&x=Gx^t+Hu.\yesnumber
\end{IEEEeqnarray*}

After the formulation of the optimization problem in terms of an individual agent, we generalize the single agent MPC problem to a multi-agent optimization problem with $N$ agents. Each agent is required to satisfy the aforementioned dynamic constraint and box constraints. Therefore, the multi-agent MPC optimization problem is represented by
\begin{IEEEeqnarray*}{rl} \label{eq.multi}
\min_{x_i,u_i} \quad
& \displaystyle \sum_{i=1}^N \Big((x_i-r_i)^\top Q_i(x_i-r_i)+\delta_{{\mathcal X}_i}(x_i)\\
&\quad\quad+u^\top_iR_iu_i+\delta_{{\mathcal U}_i}(u_i)+\alpha_i\|u_i\|_1\Big)\\
\operatorname{subject\ to} \quad
&x_i=G_ix_{i}^t+H_iu_i\\
&\forall i=1,2,\dotsm,N,\IEEEyesnumber
\end{IEEEeqnarray*}
where the subscript $i$ of each variable and parameter denotes that the corresponding variable and parameter belong to the $i$th agent. 

\begin{remark}
\textcolor{black}{The agents considered in this paper operate in a collaborative manner. Additionally, the broadcast communication model is employed for the autonomous agents, i.e., each agent broadcasts the information to its neighbors via the broadcaster and receives the information from its neighbors via the receiver at each 
%%BB time stamp.}
time stamp.}
\end{remark}
\subsection{Collision Avoidance Constraint}
To deal with the multi-agent collision-free control problem, we introduce the collision avoidance constraints into the optimization problem. The collision avoidance constraints can be represented in several different formats in the existing literature. The most basic one is to define the safe area as a norm ball~\cite{chen2019autonomous}. Then, the collision avoidance constraint can be represented by maintaining the distance to the obstacle to be greater than a specific value. However, this kind of constraint is naturally non-convex, which impedes the implementation of effective algorithms to find a feasible solution. Taking the aforementioned issue into account, mixed integer programming (MIP) is utilized to solve the collision avoidance problem in an effective manner~\cite{8966482,9981138,9922287}. \textcolor{black}{The MIP-based collision avoidance constraint offers a closed-form solution, which serves as a relaxation of the original norm ball collision avoidance constraint.} More specifically, an intuitive explanation of the MIP collision avoidance constraints is similar to relaxing a ball constraint to a box constraint, i.e., finding a minimal box that contains the given ball. In terms of this kind of constraint, even though the MIP problem is non-convex, a feasible point with a satisfying duality gap can be easily found. Besides, considering the intuitive selection of the possible solution, the computational efficiency can be further improved. Thus, we consider the use of MIP-based collision avoidance constraints in our proposed algorithm.

Here, to clearly describe the formulation of the MIP-based collision avoidance problem, we first introduce a case in the 2D plane as an example, and then we extend the description to a 3D scenario in this work.

Consider two agents with the position $(x_1,y_1)$ and $(x_2,y_2)$, respectively. The safe distance is given as $d$, and the collision avoidance constraint can be described as $(x_1-x_2)^2+(y_1-y_2)^2\ge d^2$. If a minimal box constraint is considered as the collision avoidance constraint, then the constraints can be described in a linear form, that is $(x_1-x_2)\ge d-\varepsilon c_1$, $(x_2-x_1)\ge d-\varepsilon c_2$, $(y_1-y_2)\ge d-\varepsilon c_3$, $(y_2-y_1)\ge d-\varepsilon c_4$, and $\sum_{j=1}^4 c_j\le 3$, where $\varepsilon$ is chosen as a large number, and $c_1,c_2,c_3,c_4$ are four binary variables belonging to the set $\{0,1\}$. The relaxed constraint ensures that each agent is definitely outside the minimal box constraint of the other agent. If the binary variable value is $1$, it means the corresponding linear inequality constraint is trivial. To ensure at least one nontrivial linear inequality constraint exists, i.e., two agents keep a safe distance, the summation of all binary variable values must be less than the total number of the binary variables.

\textcolor{black}{In terms of two agents with the 3D Cartesian coordinates $(x_1,y_1,z_1)$ and $(x_2,y_2,z_2)$, the collision avoidance constraint can be described as $(x_1-x_2)^2+(y_1-y_2)^2+(z_1-z_2)^2\ge d^2$. Similarly, the constraints can be described in a linear form, that is $(x_1-x_2)\ge d-\varepsilon c_1$, $(x_2-x_1)\ge d-\varepsilon c_2$, $(y_1-y_2)\ge d-\varepsilon c_3$, $(y_2-y_1)\ge d-\varepsilon c_4$, $(z_1-z_2)\ge d-\varepsilon c_5$, $(z_2-z_1)\ge d-\varepsilon c_6$, and $\sum_{j=1}^6 c_j\le 5$, where $c_1,\dots,c_6$ are binary variables belonging to the set $\{0,1\}$. Therefore, to formulate the collision avoidance problem in a general case, for all positive integers $i=1,2,\dotsm,N$, $j=2,3,\dotsm,N,j>i$, $\tau=t,t+1,\dotsm,t+T-1$, and $\ell=1,2,\dotsm,6$, we have the following MIP constraints:
\begin{IEEEeqnarray}{l}
E_{\tau\ell}(x_i-x_j)\ge d-\varepsilon c_{ij\tau\ell}\IEEEyesnumber\IEEEyessubnumber\label{eq:MILP}\\
\sum_{\ell=1}^6c_{ij\tau \ell}\le 5,\IEEEyessubnumber\label{eq:MILP_sum}
\end{IEEEeqnarray}
where the linear operator $E_{\tau \ell},\forall \ell=1,3,5$ extracts the position state variables in 3 dimensions $x,y,z$ in the Cartesian coordinates at the time stamp $\tau$, respectively;  the linear operator $E_{\tau \ell},\forall \ell=2,4,6$ extracts the position state variables in 3 dimensions $x,y,z$ in Cartesian coordinates at the time stamp $\tau$ with a negative sign, respectively;} $d$ denotes the safe distances in all three dimensions $x,y,z$, respectively; $\varepsilon$ is a large number given in the MIP problem; $c_{ij\tau \ell}$ represents the integer optimization variables, which introduce the non-convex constraints into the optimization problem. The inequality constraint~\eqref{eq:MILP_sum} provides the condition that there is no such a case where all the inequality constraints in~\eqref{eq:MILP} are activated.

Then, we define the vector $c\in\mathbb R^{3N(N-1)T}$ as a vector including the $c_{ij\tau \ell}$ for all $i=1,2,\dotsm,N$, $j=2,3,\dotsm,N,j>i$, $\tau=t,t+1,\dotsm,t+T-1$, $\ell=1,2,\dotsm,6$ in the ascending order, that is
\begin{IEEEeqnarray*}{rCl}
c&=&\Big(c_{12t1},\dotsm,c_{12t6},\dotsm,\\
   &&c_{12(t+T-1)1},\dotsm,c_{12(t+T-1)6},\dotsm,\\
   &&c_{(N-1)Nt1},\dotsm,c_{(N-1)Nt6},\dotsm,\\
   &&c_{(N-1)N(t+T-1)1},\dotsm,c_{(N-1)N(t+T-1)6}\Big).\yesnumber
\end{IEEEeqnarray*}

Then, the optimization problem is equivalent to
\begin{IEEEeqnarray*}{rl} \label{eq.mpc2}
\min_{x_i,u_i,c} \quad
& \displaystyle \sum_{i=1}^N \Big((x_i-r_i)^\top Q_i(x_i-r_i)+\delta_{{\mathcal X}_i}(x_i)\\
&+u^\top_iR_iu_i+\delta_{{\mathcal U}_i}(u_i)+\alpha_i\|u_i\|_1\Big)+\delta_{\mathcal C}(c)\\
\operatorname{subject\ to} \quad
&x_i=G_ix_{i}^t+H_iu_i,\\
&E_{\tau \ell}(x_i-x_j)- d+\varepsilon c_{ij\tau \ell}\ge0\\
&\displaystyle\sum_{\ell=1}^6c_{ij\tau \ell}- 5\le 0\\
&\forall i=1,2,\dotsm,N\\
&\forall j= 2,3,\dotsm,N,j>i\\
&\forall \tau=t,t+1,\dotsm,t+T-1\\
&\forall \ell=1,2,\dotsm,6,\yesnumber
\end{IEEEeqnarray*}
where $\mathcal C$ denotes a non-convex cone where only two elements $0,1$ exist, i.e., $\mathcal C=\{0,1\}^{3N(N-1)T}$. To simplify the notations, define the vector $y=(x_1,u_1,x_2,u_2,\dotsm,x_N,u_N,c)\in\mathbb R^{(m+n+3N-3)TN}$, and the sets $\mathcal Y_1$ and $\mathcal Y_2$ as
\begin{IEEEeqnarray*}{l}
\mathcal Y_1=\left\{\left.y=\begin{bmatrix}x_1\\u_1\\x_2\\u_2\\\vdots\\x_N\\u_N\\c\end{bmatrix}\,\right|\, 
\begin{array}{l}
H_iu_i-x_i+G_ix_{i}^t=0\\\forall i=1,2,\dotsm,N
\end{array}
\right\}\\
\mathcal Y_2=\left\{\left.y=\begin{bmatrix}x_1\\u_1\\x_2\\u_2\\\vdots\\x_N\\u_N\\c\end{bmatrix}\,\right|\, 
\begin{array}{l} 
E_{\tau \ell}(x_i-x_j)- d+\varepsilon c_{ij\tau \ell}\ge0\\
\displaystyle\sum_{\ell=1}^6c_{ij\tau \ell}- 5\le 0\\
\forall i=1,2,\dotsm,N\\
\forall j=2,3,\dotsm,N,j>i\\
\forall \tau=t,t+1,\dotsm,t+T-1\\
\forall \ell=1,2,\dotsm,6
\end{array}\right\}.\\\yesnumber
\end{IEEEeqnarray*}
\begin{remark}
	To realize the parallel computation, we separate the independent constraints and the coupling constraints in the multi-agent system by defining two sets $\mathcal Y_1$ and $\mathcal Y_2$. The set $\mathcal Y_1$ contains only the constraint belonging to the individual agent, while the set $\mathcal Y_2$ contains only the coupling constraints. And it is straightforward to see that the sub-problem subject to $\mathcal Y_1$ can be calculated in a parallel manner, since each agent is independent in this sub-problem.
\end{remark}
To denote the optimization problem in a compact form and convert it to a separatable problem, we define the function $f_i$ as
\begin{IEEEeqnarray*}{rCl}
f_i(x_i,u_i)&=&(x_i-r_i)^\top Q_i(x_i-r_i)+\delta_{{\mathcal X}_i}(x_i)\\
&&+u^\top_iR_iu_i+\delta_{{\mathcal U}_i}(u_i)+\alpha_i\|u_i\|_1.\yesnumber
\end{IEEEeqnarray*}
Then, we define the operator $f:\mathbb R^{(m+n+3N-3)TN}\rightarrow\mathbb R^{N+1}$ and $g:\mathbb R^{(m+n+3N-3)TN}\rightarrow\mathbb R^{N+1}$, and also we introduce a new vector $z\in\mathbb R^{(m+n+3N-3)TN}$, such that
\begin{IEEEeqnarray*}{rCl}
f(y)&=&\big(f_1,f_2,\dotsm,f_N,\mathbf 0\big)(y),\\
g(z)&=&\big(\underbrace{\mathbf 0,\mathbf 0,\dotsm,\mathbf 0}_{N},\delta_\mathcal C\big)(z),\yesnumber
\end{IEEEeqnarray*}
where $\mathbf 0(\cdot)$ denotes the zero operator in the Hilbert space. Finally, the multi-agent optimization problem~\eqref{eq.multi} is formulated as
\begin{IEEEeqnarray*}{rl}\label{eq:outer_admm_prob}
\min_{y,z} \quad
& \mathbf 1^\top f(y)+\delta_{\mathcal Y_1}(y)+\mathbf 1^\top g(z)+\delta_{\mathcal Y_2}(z)\\
\operatorname{subject\ to} \quad
&y-z=0,\yesnumber
\end{IEEEeqnarray*}
where $z\in\mathbb R^{(m+n+3N-3)TN}$ is the consensus variable; $\mathbf 1$ denotes the vector with an appropriate dimension and all entries as $1$.

\section{Optimization for Parallel MPC}
The main purpose of the proposed optimization scheme is to separate the centralized MPC problem into two groups of sub-problems. The first group deals with the quadratic objective function, dynamic constraints, and box constraints for each UAV system (i.e., the individual constraints), and thus the first group of sub-problems can be calculated in a parallel manner. The second group deals with the collision avoidance constraint (i.e., the coupling constraints), which is usually straightforward to be solved. Therefore, we propose the use of the ADMM optimization scheme. To solve the problem by ADMM, we firstly introduce the augmented Lagrangian function, which can be denoted by
\begin{IEEEeqnarray*}{rCl}
\mathcal L_\sigma(y,z;\lambda)&=&\mathbf 1^\top f(y)+\delta_{\mathcal Y_1}(y)\\
&&+\mathbf 1^\top g(z)+\delta_{\mathcal Y_2}(z)+\langle\lambda,y-z\rangle+\dfrac{\sigma}{2}\|y-z\|^2\\
&=&\mathbf 1^\top f(y)+\delta_{\mathcal Y_1}(y)+\mathbf 1^\top g(z)+\delta_{\mathcal Y_2}(z)\\
&&+\dfrac{\sigma}{2}\|y-z+\sigma^{-1}\lambda\|^2-\dfrac{1}{2\sigma}\|\lambda\|^2,\yesnumber
\end{IEEEeqnarray*}
where $\lambda=\big(\lambda_1,\lambda_2,\dotsm,\lambda_N,\lambda_c\big)\in\mathbb R^{(m+n+3N-3)TN}$ is the Lagrangian multiplier; $\sigma$ is the penalty parameter of ADMM. Then, the ADMM algorithm can be denoted in an iterative computation manner
\begin{IEEEeqnarray*}{rCl}\label{eq:ADMM_iterations}
y^{k+1}&=&\displaystyle\operatorname*{argmin}_y \;\mathcal L_\sigma\Big(y,z^k;\lambda^k\Big)\IEEEyesnumber\IEEEyessubnumber\label{eq:ADMM_iterations_1}\\
z^{k+1}&=&\displaystyle\operatorname*{argmin}_z \;\mathcal L_\sigma\Big(y^{k+1},z;\lambda^k\Big)\IEEEyessubnumber\label{eq:ADMM_iterations_2}\\
\lambda^{k+1}&=& \lambda^k+\sigma\Big(y^{k+1}-z^{k+1}\Big),\IEEEyessubnumber
\end{IEEEeqnarray*}
where the superscript $k$ denotes the optimization variable in the $k$th iteration. To test the attainment of the optimum, a stopping criterion for the iterations is needed. Here, we define the primal residual error as $\epsilon_{pri}^k=\|y^k-z^k\|$, and the stopping criterion can be chosen as $\epsilon_{pri}^k\le\hat \epsilon$, where $\hat \epsilon$ is a small positive value. It means that the iterative computation will stop when $\epsilon_{pri}^k\le\hat \epsilon$ is satisfied and the optimum is attained.

\subsection{First Sub-Problem in ADMM}
The first sub-problem~\eqref{eq:ADMM_iterations_1} in the ADMM iterations can be calculated in a parallel manner, which means that for each agent $i$, we have
\begin{IEEEeqnarray*}{l}
\Big(x^{k+1}_i,u^{k+1}_i\Big)=\displaystyle\operatorname*{argmin}_{x_i,u_i} \;\bigg\{f_i(x_i,u_i)+\frac{\sigma}{2} \Big\|(x_i,u_i)\\
\qquad-z^k_i+\sigma^{-1}\lambda_i^k\Big\|^2\;\bigg|\;\;H_iu_i-x_i+G_ix_{i}^t=0\bigg\},\yesnumber
\end{IEEEeqnarray*}
where $z$ is separated and denoted by $z=(z_1,z_2,\dotsm,z_N,z_c)$ and $z\in\mathbb R^{(m+n)T\times (m+n)T\times \dotsm (m+n)T \times 3N(N-1)T}$. It follows that for each agent $i$, the following optimization problem is solved:
\begin{IEEEeqnarray*}{rl}\label{eq:opt_second_pro_2}
\min_{x_i,u_i} \quad
& (x_i-r_i)^\top Q_i(x_i-r_i)+u_i^\top R_iu_i\\
&+\alpha_i\|u_i\|_1+\dfrac{\sigma}{2}
\Big\|(x_i,u_i)-z^k_i+\sigma^{-1}\lambda_i^k\Big\|^2\\
\operatorname{subject\ to} \quad
&H_iu_i-x_i+G_ix_{i}^t=0\\
&\underline x_i\preceq x_i\preceq \overline x_i\\
&\underline u_i\preceq u_i\preceq \overline u_i.\yesnumber
\end{IEEEeqnarray*}

It is straightforward to see that the sub-problem~\eqref{eq:opt_second_pro_2} is a constrained quadratic optimization problem with a lasso term, which can be solved by many existing approaches, such as semi-proximal ADMM. However, the computation efficiency is poor when solving the optimization problem~\eqref{eq:opt_second_pro_2} directly. As an alternative, we propose to solve the dual problem of~\eqref{eq:opt_second_pro_2}, because it can be formulated as a typical inequality constrained quadratic optimization problem, which can be solved apparently faster than the primal problem~\eqref{eq:opt_second_pro_2}.

For simplicity, define $\beta_i=z_i^k-\sigma^{-1}\lambda_i^k$, $\xi_i=(x_i,u_i)$, and the following notations:
\begin{IEEEeqnarray*}{rClCl}
\mathcal Q_i&=&\operatorname{diag}(Q_i,R_i)+\dfrac{\sigma}{2}I_{(m+n)T}&\in&\mathbb S^{(m+n)T}\\
q_i &=& -\Big(2Q_ir_i,0_{mT}\Big)-\sigma\beta_i&\in&\mathbb R^{(m+n)T}\\
\mathcal P&=&[0_{mT\times nT}\quad I_{mT}]&\in&\mathbb R^{mT\times (m+n)T}\\
\mathcal M_i&=&[-I_{nT},H_i]&\in&\mathbb R^{nT\times (n+m)T}\\
m_i&=&-G_ix_{i}^t&\in&\mathbb R^{nT}\\
\underline \xi_i&=&(\underline x_i,\underline u_i)&\in&\mathbb R^{(m+n)T}\\
\overline\xi_i&=&(\overline x_i,\overline u_i)&\in&\mathbb R^{(m+n)T}\\
\mathcal N_i &=& \big(I_{(m+n)T},-I_{(m+n)T}\big)&\in&\mathbb R^{2(m+n)T\times (m+n)T}\\
n_i &=& \Big(\overline \xi_i,-\underline  \xi_i\Big)&\in&\mathbb R^{2(m+n)T}.\yesnumber
\end{IEEEeqnarray*}
Then, the problem~\eqref{eq:opt_second_pro_2} is converted to
\begin{IEEEeqnarray*}{rl}\label{eq:primal_problem}
\min_{\xi_i,p_i} \quad
&\xi_i^\top \mathcal Q_i\xi_i+q_i^\top \xi_i +\alpha_i\|p_i\|_1 \\
\operatorname{subject\ to} \quad
&\mathcal M_i\xi_i=m_i\\
&\mathcal N_i\xi_i\preceq n_i\\
&\mathcal P\xi_i=p_i. \yesnumber
\end{IEEEeqnarray*}
Notably, the vectors $m_i$ and $n_i$ are constant vectors but the vector $p_i$ is a variable in the optimization problem.
%\begin{assumption}\label{assumption:slater_condition}
%Assume that there exist vectors $\xi_i$ and $\zeta_i$ such that $\mathcal M_i\xi_i=m_i$, $\mathcal P\xi_i=p_i$, and $\underline \xi_i\prec \xi_i\prec \overline \xi_i$.
%\end{assumption}
%\begin{assumption}
%Assume that the Slater’s condition is satisfied such that the strong duality holds for the proposed optimization problem.
%\end{assumption}

\begin{assumption}\label{assumption:strong}
	Problem~\eqref{eq:primal_problem} is strictly feasible.
\end{assumption}
Under Assumption~\ref{assumption:strong}, the Slater’s condition is satisfied. Therefore, strong duality always holds for the proposed optimization problem, and the optimal solution to the linear quadratic optimization problem~\eqref{eq:primal_problem} can be obtained by solving the corresponding dual problem, due to the difficulty to deal with the primal problem~\eqref{eq:opt_second_pro_2} directly.

Then, it follows that the dual problem of the proposed optimization problem can be derived. To find the dual problem, the Lagrangian function with respect to the sub-problem is denoted by
\begin{IEEEeqnarray*}{l}
\mathcal L_i(\xi_i,p_i;\mu_{i1},\mu_{i2},\mu_{i3})=\xi_i^\top \mathcal Q_i\xi_i+q_i^\top \xi_i +\alpha_i\|p_i\|_1 \\
+\big\langle\mu_{i1},\mathcal M_i\xi_i-m_i\big\rangle+\big\langle\mu_{i2},\mathcal N_i\xi_i- n_i\big\rangle+\big\langle\mu_{i3},\mathcal P\xi_i-p_i\big\rangle,\\\yesnumber
\end{IEEEeqnarray*}
where $\mu_{i1}\in\mathbb R^{nT}$, $\mu_{i2}\in\mathbb R^{2(m+n)T}$, and $\mu_{i3}\in\mathbb R^{mT}$ are the Lagrangian dual variables. Then, the dual problem can be denoted by
\begin{IEEEeqnarray*}{rl}\label{eq:dual_problem}
 \max_{\mu_{i1},\mu_{i2},\mu_{i3}} \quad 
& \min_{\xi_i,p_i}\quad \mathcal L_i(\xi_i,p_i;\mu_{i1},\mu_{i2},\mu_{i3}) \\
\operatorname{subject\ to} \quad
&\mu_{i2}\succeq 0.\yesnumber
\end{IEEEeqnarray*}

In the following, the parameter of the Lagrangian function is ignored for the simplicity of the notation. Then, the minimization of the Lagrangian function $\mathcal L_i$ can be separated into two parts: the minimization with respect to $\xi_i$ and the minimization with respect to $p_i$:
\begin{IEEEeqnarray*}{rl}\label{eq:min_mathcal_L}
\min_{\xi_i,p_i} \mathcal L_i &= \min_{\xi_i}\Big\{ \xi_i^\top \mathcal Q_i\xi_i+q_i^\top \xi_i\\
&+\big\langle\mu_{i1},\mathcal M_i\xi_i\big\rangle+\big\langle\mu_{i2},\mathcal N_i\xi_i\big\rangle+\big\langle\mu_{i3},\mathcal P\xi_i\big\rangle\Big\}\\
&+\min_{p_i}\Big\{\alpha_i\|p_i\|_1-\big\langle\mu_{i3},p_i\big\rangle\Big\}-\big\langle\mu_{i1},m_i\big\rangle-\big\langle\mu_{i2}, n_i\big\rangle.\\\yesnumber
\end{IEEEeqnarray*}
Since the first part in~\eqref{eq:min_mathcal_L} is an unconstrained optimization problem, thus its optimum can be determined by setting the gradient to be zero, that is
\begin{IEEEeqnarray}{rCl}
2\mathcal Q_i\hat \xi_i+q_i+\mathcal M_i^\top \mu_{i1}+\mathcal N_i^\top \mu_{i2}+\mathcal P^\top \mu_{i3}=0,
\end{IEEEeqnarray}
where $\hat \xi_i$ denotes the optimal $\xi_i$ in the optimization problem. Therefore, we have
\begin{IEEEeqnarray}{rCl}
\hat \xi_i =-\frac{1}{2}\mathcal Q_i^{-1}\Big(q_i+\mathcal M_i^\top \mu_{i1}+\mathcal N_i^\top \mu_{i2}+\mathcal P^\top \mu_{i3}\Big).
\end{IEEEeqnarray}
Notably, because the matrices $Q_i$ and $R_i$ are positive semi-definite, the matrix $\mathcal Q_i$ is non-singular.

To find the minimization solution of the second part in~\eqref{eq:min_mathcal_L}, we introduce the definition of the conjugate function. 
\begin{definition}
The conjugate function of a function $f$ is defined as
\begin{IEEEeqnarray}{rCl}
f^*(y)&=&\sup\{\langle x,y\rangle-f(x)\}.
\end{IEEEeqnarray}
\end{definition}
Define $h(p_i) = \alpha_i\|p_i\|_1$. Then, it is straightforward to see that the second part in~\eqref{eq:min_mathcal_L} is the conjugate function of $h$, which can be denoted by $h^*(\mu_{i3})$. It is well-known the conjugate function of a norm function is an indicator function with respect to the corresponding dual norm ball, that is 
\begin{IEEEeqnarray}{rl}
h^*(\mu_{i3})=\left\{
\begin{array}{rl}
0,& \text{if }\|\mu_{i3}\|_\infty \le \alpha_i\\
\infty, & \text{otherwise}.
\end{array}\right.
\end{IEEEeqnarray}
Then, we have 
\begin{IEEEeqnarray*}{l}
\min_{\xi_i,p_i}\quad \mathcal L_i =-\frac{1}{4}\Big(q_i+\mathcal M_i^\top \mu_{i1}+\mathcal N_i^\top \mu_{i2}+\mathcal P^\top \mu_{i3}\Big)^\top \mathcal Q_i^{-1}\\
\qquad\quad\quad\Big(q_i+\mathcal M_i^\top \mu_{i1}+\mathcal N_i^\top \mu_{i2}+\mathcal P^\top \mu_{i3}\Big)-h^*(\mu_{i3}).\IEEEeqnarraynumspace\yesnumber
\end{IEEEeqnarray*}
Subsequently, the dual problem~\eqref{eq:dual_problem} is given by
\begin{IEEEeqnarray*}{rl}
\displaystyle \min_{\mu_{i1},\mu_{i2},\mu_{i3}}\quad &
\dfrac{1}{4}\Big(q_i+\mathcal M_i^\top \mu_{i1}+\mathcal N_i^\top \mu_{i2}+\mathcal P^\top \mu_{i3}\Big)^\top \mathcal Q_i^{-1}\\
&\Big(q_i+\mathcal M_i^\top \mu_{i1}+\mathcal N_i^\top \mu_{i2}+\mathcal P^\top \mu_{i3}\Big)\\
&+\big\langle\mu_{i1},m_i\big\rangle+\big\langle\mu_{i2}, n_i\big\rangle \\
\operatorname{subject\ to} \quad
&\mu_{i2}\succeq 0\\
&\|\mu_{i3}\|_\infty \le \alpha_i.\yesnumber
\end{IEEEeqnarray*}
Define the following vectors and matrices:
\begin{IEEEeqnarray*}{l}
\mu_i = (\mu_{i1},\mu_{i2},\mu_{i3})\in\mathbb R^{3(m+n)T}\\
\Omega_i = \Big(\mathcal M_i,\mathcal N_i,\mathcal P\Big)\in\mathbb R^{3(m+n)T\times (m+n)T}\\
\omega_i = \Big(m_i,n_i,0_{mT}\Big)\in\mathbb R^{3(m+n)T}\\
\Phi_i=\Big\{\mu_i = (\mu_{i1},\mu_{i2},\mu_{i3})\, \Big|\, \mu_{i2}\succeq 0, \|\mu_{i3}\|_\infty \le \alpha_i\Big\}.\yesnumber
\end{IEEEeqnarray*}
It follows that the optimization problem can be denoted in a compact form:
\begin{IEEEeqnarray*}{rl}\label{eq:dual_prob1}
\displaystyle \min_{\mu_i} \quad &
\dfrac{1}{4}\Big(\Omega^\top _i\mu_i+q_i\Big)^\top \mathcal Q_i^{-1}\Big(\Omega^\top _i\mu_i+q_i\Big)+\big\langle \omega_i,\mu_i\big\rangle\\
\operatorname{subject\ to} \quad
&\mu_i\in\Phi.\yesnumber
\end{IEEEeqnarray*}

It is obvious that the optimization problem~\eqref{eq:dual_prob1} is an inequality constrained quadratic optimization problem, which has been widely explored and can be solved much more efficiently than the primal optimization problem~\eqref{eq:primal_problem}.

%Define the function 
%\begin{IEEEeqnarray*}{rl}\label{eq:dual_obj}
%\psi_i(\mu_i)=\dfrac{1}{4}\Big(\Omega^\top _i\mu_i+q_i\Big)^\top \mathcal Q_i^{-1}\Big(\Omega^\top _i\mu_i+q_i\Big)+\big\langle \omega_i,\mu_i\big\rangle,\IEEEeqnarraynumspace\yesnumber
%\end{IEEEeqnarray*}
%and then it follows that the gradient of the function $\psi_i(\mu_i)$ can be represented by
%\begin{IEEEeqnarray*}{rl}
%\partial_{\mu_i}\psi_i(\mu_i)=\frac{1}{2}\Omega_i\mathcal Q_i^{-1}\Big(\Omega^\top _i\mu_i+q_i\Big)+\omega_i.\yesnumber
%\end{IEEEeqnarray*}
%It is straightforward to see that the function $\partial_{\mu_i}\psi_i(\mu_i)$ is Lipschitz continuous with the Lipschitz constant $L_i=\dfrac{1}{2}\lambda_{\max}\Big(\Omega_i\mathcal Q_i^{-1}\Omega_i^\top \Big)$, where $\lambda_{\max}(\cdot)$ returns the largest eigenvalue of the matrix.
%
%Then in each agent, the optimization algorithm is given by
%\begin{IEEEeqnarray*}{rCl}
%\bar \mu_i^k&=&\mu_i^k+\dfrac{k-1}{k+2}\Big(\mu_i^k-\mu_i^{k-1}\Big)\\
%\mu_i^{k+1}&=&\Pi_\Phi\bigg(\bar \mu_i^k-\dfrac{1}{L}\partial_{\mu_i}\psi_i\big(\bar \mu_i^k\big)\bigg).
%\yesnumber
%\end{IEEEeqnarray*}
%
%The stopping criterion is chosen as the duality gap since it is time-consuming to find the primal dual residual error with the APG method. In the $k$th iteration, it is obvious to see that 
%\begin{IEEEeqnarray}{rCl}
%\xi_i^{k} =-\frac{1}{2}\mathcal Q_i^{-1}\Big(\Omega_i^\top \mu_{i}^{k}+q_i\Big).
%\end{IEEEeqnarray}
After the solution to the dual problem is obtained, the optimal $\xi$ can be calculated by
\begin{IEEEeqnarray*}{rCl}
	\xi_i^* =-\frac{1}{2}\mathcal Q_i^{-1}\Big(\Omega_i^\top \mu_{i}^*+q_i\Big),\yesnumber
\end{IEEEeqnarray*}
where $\mu_{i}^*$ denotes the value of the corresponding vector when the stopping criterion is reached.

After that, each agent sends the latest optimal value $x_i^{k+1}$ and $u_i^{k+1}$ to the other agents, and in each agent, the following updating rule is performed:
\begin{IEEEeqnarray}{rl}
c^{k+1}=z_c^k-\sigma^{-1}\lambda_c^k.
\end{IEEEeqnarray}

\subsection{Second Sub-Problem in ADMM}

The second sub-problem~\eqref{eq:ADMM_iterations_2} in the main ADMM iterations can be represented by
\begin{IEEEeqnarray*}{rl}
\displaystyle \min_z \quad 
& \delta_\mathcal C(z_c)+\dfrac{\sigma}{2}\Big\|z-y^{k+1}-\sigma^{-1}\lambda^k\Big\|^2\\
\operatorname{subject\ to} \quad
&E_{\tau \ell}(z_{xi}-z_{xj})+\varepsilon z_{c,ij\tau \ell}-d\ge0\\
&5-\displaystyle\sum_{\ell=1}^6z_{c,ij\tau \ell}\ge0\\
&\forall i=1,2,\dotsm,N\\
&\forall j= 2,3,\dotsm,N,j>i\\
&\forall \tau=t,t+1,\dotsm,t+T-1\\
&\forall \ell=1,2,\dotsm,6,\yesnumber
\end{IEEEeqnarray*}
where the location of $z_{c, ij\tau \ell}$ in the vector $z$ is the same as the location of $c_{ij\tau \ell}$ in the vector $y$. 

Define the new linear operator $\mathcal E_{ij\tau\ell}: \mathbb R^{[m+n+3N-3]TN}\rightarrow\mathbb R$ as 
\begin{IEEEeqnarray*}{rCl}
\mathcal E_{ij\tau0}(z)&=&-\displaystyle\sum_{\ell=1}^6z_{c,ij\tau \ell}\\
\mathcal E_{ij\tau\ell}(z)&=&E_{\tau \ell}(z_{xi}-z_{xj})+\varepsilon z_{c,ij\tau \ell}\\
\forall i&=&1,2,\dotsm,N\\
\forall j&=&2,3,\dotsm,N, j>i\\
\forall \tau &=& t,t+1,\dotsm,t+T-1\\
\forall \ell&=&1,2,\dotsm,6, \yesnumber
\end{IEEEeqnarray*}
and the linear operator $\mathcal E: \mathbb R^{[m+n+(13/2)N-(13/2)]TN}\rightarrow\mathbb R^{(7/2)N(N-1)T}$ that contains all the linear operator $\mathcal E_{ij\tau\ell}$ for all $i=1,2,\dotsm,N$, $j=2,3,\dotsm,N,j>i$, $\tau=t,t+1,\dotsm,t+T-1$, $\ell=0,1,\dotsm,6$ in the ascending order. Also, define the vector $\mathcal E_c \in\mathbb R^{(7/2)N(N-1)T}$ as
\begin{IEEEeqnarray*}{rl}
\mathcal E_c=\Big(\underbrace{5,\underbrace{-d,-d,\dotsm,-d}_6,\dotsm,
	5,\underbrace{-d,-d,\dotsm,-d}_6}_\text{There are $(1/2)N(N-1)T$ repetition}\Big). \yesnumber
\end{IEEEeqnarray*}
Further, define the constant $\gamma^k$ in the $k$th iteration as 
\begin{IEEEeqnarray}{l}
\gamma^k = y^{k+1}+\sigma^{-1}\lambda^k,
\end{IEEEeqnarray}
and then the optimization problem can be represented by
\begin{IEEEeqnarray*}{rl}\label{eq:ADMM2_compact}
\displaystyle \min_z \quad
&\Big\|z-\gamma^k\Big\|^2+\delta_\mathcal C(z_c)\\
\operatorname{subject\ to} \quad
&\mathcal E(z)+\mathcal E_c\succeq 0.\yesnumber
\end{IEEEeqnarray*}
Remarkably, \eqref{eq:ADMM2_compact} is a constrained least square optimization problem with integer optimization variables, and feasible solutions can be readily obtained by many existing methods. 

\subsection{Initialization of Optimization Variables}\label{section:init}

Since the problem~\eqref{eq:outer_admm_prob} is a non-convex optimization problem, it suffers from NP-hardness to find the global optimum. However, some intuitive initialization procedures are straightforward to be implemented to accelerate the optimization process. To introduce the initialization procedures, the following matrices are introduced:
\begin{IEEEeqnarray*}{rCl}
E_{ij\tau\ell}(z)&=&E_{\tau \ell}(z_{xi}-z_{xj})\\
\forall i&=&1,2,\dotsm,N\\
\forall j&=&2,3,\dotsm,N, j>i\\
\forall \tau &=& t,t+1,\dotsm,t+T-1\\
\forall \ell&=&1,2,\dotsm,6.\yesnumber
\end{IEEEeqnarray*}
Then, define the linear operator $E: \mathbb R^{[m+n+3N-3]TN}\rightarrow\mathbb R^{3N(N-1)T}$, which contains all the linear operator $ E_{ij\tau\ell}$ for all $i=1,2,\dotsm,N$, $j=2,3,\dotsm,N,j>i$, $\tau=t,t+1,\dotsm,t+T-1$, $\ell=1,\dotsm,6$ in the ascending order. Notably, the linear operator $E$ gives the information of the distances between different agents. 

In order to determine the initial optimization variables, we can use the linear operator $E$ to find a feasible vector $z_c$. To be more specific, if the distance between a pair of agents in a direction is greater or equal than the safe distance $d$, then the corresponding entry in the vector $z_c$ will be $0$, and otherwise it will be $1$. With this technique, the initial parameter can be determined easily.

\begin{remark}
	Based on the proposed initialization procedures, the optimum can be reached for the sub-problem~\eqref{eq:ADMM_iterations_2} directly, under the condition that each pair of agents maintains a safe distance. This is because the vector $z_c$ is chosen  as a feasible solution with the proposed initialization procedures. It is also pertinent to note that if a safe distance is not maintained initially, the proposed initialization procedures will fail because the initial choice of the vector $z_c$ will not be feasible.
\end{remark}

\begin{remark}
	The control law of the proposed model predictive controller is the optimal solution to the optimization problem~\eqref{eq:outer_admm_prob} in a receding time window, and the controller parameters are always changed at each time stamp. To be more specific, a receding horizon control (RHC) structure based on the solution of the MPC problem is utilized in the proposed algorithm.
\end{remark}

\begin{remark}
	In terms of the stability of the closed-loop system, there are many existing works that report the results on the stability of the typical model predictive controller such as~\cite{9109670, 8637808, eskandarpour2020constrained}.
\end{remark}

\begin{remark}
	It is well-known that increasing the prediction horizon can enlarge the domain of attraction of the MPC controller. However, this will increase the computational complexity. Weighting the terminal cost can also enlarge the domain of attraction of the MPC controller without the occurrence of the terminal constraints; thus, the stabilizing weighting factor of a given initial state can be included, which has been proved in~\cite{limon2006stability} that the asymptotic stability of the nonlinear MPC controller can be achieved.
\end{remark}

The proposed parallel optimization method is summarized as Algorithm~\ref{algo:1}.
% and the block diagram of the proposed algorithm is shown in Fig.~\ref{fig:flow}.

\begin{algorithm}
	\caption{ADMM for Parallel MPC Problem}
	\begin{algorithmic}[1]\label{algo:1}
		\renewcommand{\algorithmicrequire}{\textbf{Input:}}
		\renewcommand{\algorithmicensure}{\textbf{Output:}}
		\REQUIRE weighting matrices for each agent $Q_i$ and $R_i$; box constraint parameter vectors for each agent $\underline x_i, \overline x_i, \underline u_i, \overline u_i$; reference signal for each agent $r_i$;  initial system state vector for each agent $x^0_i$; penalty parameter $\sigma$ in the augmented Lagrangian function; prediction horizon $T$; number of total time stamps $T_t$; maximum ADMM iteration number $K$; stopping criterion $\hat \epsilon$.
		\FOR{$t=0,1,\dotsm,T_t$}
		\STATE Choose the initial optimization variable $y^0$.
		\FOR{$k=0,1,\dotsm,K$}
		\STATE Update the optimization variable $y^{k+1}$ by~(\ref{eq:ADMM_iterations}a) on each agent \textbf{parallelly} and share the results to the central controller.
		\STATE Update the consensus variable $z^{k+1}$ by~(\ref{eq:ADMM_iterations}b) on the central controller. 
		\STATE Update the Lagrangian multiplier $\lambda^{k+1}$ by~(\ref{eq:ADMM_iterations}c) on the central controller.
		\STATE Calculate $\epsilon_{pri}$ on the central controller.
		\IF {$\epsilon_{pri}\le \hat \epsilon$}
		\STATE Update the optimal control input vector for each agent $u^*_{i}$ from the vector $z^{k+1}$.
		\STATE \textbf{break}
		\ENDIF
		\ENDFOR
		\STATE Apply the control input signal at the first time stamp in $u^*_{i}$ to the system dynamic model for each agent.
		\STATE Calculate the initial state vector for each agent $x^{t+1}_i$.
		\ENDFOR
		\RETURN feasible trajectory for each agent $x^0_i,x^1_i,\dots,x^{T_t}_i$.
	\end{algorithmic} 
\end{algorithm}
% \begin{figure}[t]
% 	\centering
% 	\includegraphics[trim=30 0 50 0, clip, width=1\linewidth]{Figure/flow}
% 	\caption{Block diagram of the proposed algorithm.}
% 	\label{fig:flow}
% \end{figure}
\section{Convergence Analysis}
In this section, the proof of the convergence with the proposed ADMM iterations is given. Also, a new ADMM iterations scheme is applicable to solve the proposed optimization problem. Analysis and relationship between the two schemes are discussed after the convergence proof.
\subsection{Convergence Analysis for the Proposed ADMM Iterations}
To analyze the convergence of the optimization problem~\eqref{eq:outer_admm_prob} and simplify the notations, we define the objective function of the optimization problem~\eqref{eq:outer_admm_prob} as $\phi(y,z)$ and then separate $\phi(y,z)$ into two parts $\varphi(y)$, $\psi(z)$, where
\begin{IEEEeqnarray*}{rCl}
	\phi(y,z) &=& \mathbf 1^\top f(y)+\delta_{\mathcal Y_1}(y)+\mathbf 1^\top  g(z)+\delta_{\mathcal Y_2}(z)\\
	\varphi(y)&=&\mathbf 1^\top f(y)+\delta_{\mathcal Y_1}(y)\\
	\psi(z) &=& \mathbf 1^\top g(z)+\delta_{\mathcal Y_2}(z) \yesnumber
\end{IEEEeqnarray*}

Before we analyze the convergence of the proposed non-convex ADMM, the following definition of the coercive function is given.
\begin{definition}
	If a function $\phi(y,z)$ is coercive over a set $\mathscr F$, then $\phi(y,z) \rightarrow \infty$ if $(y,z)\in\mathscr F$ and $\|(y,z)\|\rightarrow \infty$.
\end{definition}
%%EE
%%EE
%%EE Then, we propose the following lemma to claim that the objective function $\phi(y,z)$ is coercive in terms of the equality constraint.
Then, the following lemma (which we had previously established) 
indicates
that the objective function $\phi(y,z)$ is coercive.
%%GG
%%GG
Because of constraints of space,  
the proofs of Lemma 1 below, and also subsequently of Lemmas 2,3,4;
previously already rigorously established by us in the earlier version of the manuscript,
are moved
to an independent supplementary file.
%%GG
%%EE
\begin{lemma}\label{lemma:coecivty}
	Define a feasible set $\mathscr F = \{(y,z)\in\mathbb R^{(m+n+3N-3)TN}\times \mathbb R^{(m+n+3N-3)TN}\,|\, y-z=0\}$. In the feasible set $\mathscr F$, the objective function $\phi(y,z)$ is coercive.
\end{lemma}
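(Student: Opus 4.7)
The plan is to observe that the effective domain of $\phi$ restricted to $\mathscr{F}$ is in fact a bounded subset of $\mathbb{R}^{(m+n+3N-3)TN}\times\mathbb{R}^{(m+n+3N-3)TN}$, so coerciveness follows vacuously: any sequence in $\mathscr{F}$ with unbounded norm must eventually leave this bounded domain and therefore have $\phi = +\infty$.

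First I would unpack the indicator functions hidden inside $\phi$. Each $f_i(x_i,u_i)$ contains the summands $\delta_{\mathcal{X}_i}(x_i) + \delta_{\mathcal{U}_i}(u_i)$, so $\mathbf{1}^\top f(y) < +\infty$ forces $x_i \in [\underline{x}_i, \overline{x}_i]$ and $u_i \in [\underline{u}_i, \overline{u}_i]$ for every $i$, both of which are bounded boxes by construction. The remaining quadratic and $\ell_1$ terms in $f_i$ are non-negative and play no role in this step. Similarly, $\mathbf{1}^\top g(z) = \delta_{\mathcal{C}}(z_c) < +\infty$ forces $z_c \in \{0,1\}^{3N(N-1)T}$, which is a finite (hence uniformly bounded) set.

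Next I would use the consensus equality $y = z$ defining $\mathscr{F}$ to transfer boundedness between the two copies of the decision variable. Since $y$ and $z$ share an identical block structure, $y = z$ identifies the integer block $c$ in $y$ with $z_c$, and each pair $(x_i, u_i)$ embedded in $y$ with the corresponding block $z_i$. Combining this with the domain constraints above shows that every block of $(y, z)$ lies in a uniformly bounded set whenever $(y, z) \in \mathscr{F}$ and $\phi(y, z) < +\infty$. Consequently, $\operatorname{dom}(\phi) \cap \mathscr{F}$ is bounded.

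Coerciveness is then immediate: for any sequence $\{(y^k, z^k)\} \subset \mathscr{F}$ with $\|(y^k, z^k)\| \to \infty$, the sequence cannot stay in $\operatorname{dom}(\phi) \cap \mathscr{F}$, so $\phi(y^k, z^k) = +\infty$ for all sufficiently large $k$, yielding $\phi(y^k, z^k) \to +\infty$. I do not anticipate any real analytic obstacle here; the entire argument is driven by the box constraints $\mathcal{X}_i$, $\mathcal{U}_i$ and the integer set $\mathcal{C}$, all of which are bounded. The only item requiring care is the bookkeeping that aligns the block structure of $y$ with that of $z$ under $y = z$, so that a finite value of $\phi$ on $\mathscr{F}$ indeed bounds every coordinate of both $y$ and $z$.
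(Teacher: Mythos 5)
Your proof is correct, but it takes a genuinely different route from the paper's. The paper argues coercivity through the quadratic terms: it claims each $f_i$ contains a quadratic with a \emph{positive definite} weighting matrix, so $\mathbf 1^\top f(y)$ grows without bound as $\|y\|\to\infty$, and the constraint $y=z$ transfers this to $\phi$ on $\mathscr F$. You instead ignore the quadratics entirely and observe that the indicator functions $\delta_{\mathcal X_i}$, $\delta_{\mathcal U_i}$, and $\delta_{\mathcal C}$ confine the effective domain of $\phi$ restricted to $\mathscr F$ to a bounded set, so coercivity holds vacuously. Your route is arguably the more robust of the two: the paper only ever assumes $\hat Q\in\mathbb S^n_+$ and $\hat R\in\mathbb S^m_+$ (and indeed uses $Q_i=\operatorname{diag}(I_3,0_{9\times 9})$, $R_i=0_{4\times 4}$ in the example), so the positive-definiteness the paper's proof leans on is not actually available; moreover, the quadratic terms do not involve the integer block $c$ at all, so even with positive definite weights the paper's argument says nothing about growth in the $c$ direction and must implicitly fall back on the boundedness of $\mathcal C$ anyway. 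Your bookkeeping — using $y=z$ to pull the boundedness of $z_c$ back onto the $c$ block of $y$ and the boundedness of $(x_i,u_i)$ forward onto the corresponding blocks of $z$ — closes exactly the gaps the paper's one-line argument leaves open. What the paper's approach would buy, if the hypotheses were strengthened to genuine positive definiteness, is coercivity of $\mathbf 1^\top f$ over the whole space independent of the box constraints, which would let the result survive if $\mathcal X_i$ or $\mathcal U_i$ were unbounded; your argument is instead entirely driven by compactness of the constraint sets.
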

Next, the following lemma (likewise previously established by us)
indicates
the property of the Lipschitz sub-minimization paths in terms of the optimization problem~\eqref{eq:outer_admm_prob}.
%%FF
%%EE
\begin{lemma}\label{lemma:Lip_sub_min_path}
	The following statements hold for~\eqref{eq:outer_admm_prob_converted}:
	\begin{enumerate}[(a)]
		\item For any fixed $y$, the problem $\operatorname{argmin}_z\{\phi(y,z)\,|\, z=u \}$ has a unique minimizer, and $u\mapsto \operatorname{argmin}_z\{\phi(y,z)\,|\, z=u \}$ is a Lipschitz continuous map.
		\item For any fixed $z$, the problem $\operatorname{argmin}_y\{\phi(y,z)\,|\, y=u \}$ has a unique minimizer, and $u\mapsto \operatorname{argmin}_y\{\phi(y,z)\,|\, y=u \}$ is a Lipschitz continuous map.
		\item The two maps in Statement (a) and (b) have a universal Lipschitz constant $\bar M$, and $\bar M>0$.
	\end{enumerate}
\end{lemma}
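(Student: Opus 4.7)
My plan is to exploit the very special form of the coupling constraint in~\eqref{eq:outer_admm_prob}, namely $y - z = 0$. Written as $Ay + Bz = 0$ in standard ADMM form, we have $A = I$ and $B = -I$, so both coefficient matrices are (up to sign) the identity. The Lipschitz sub-minimization path condition therefore reduces to studying identity-type maps, which should make the lemma essentially immediate.

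For statement (a), I would fix $y$ and observe that the feasible set $\{z : z = u\}$ is the singleton $\{u\}$. Hence, whenever $u$ lies in the effective domain of $\psi$ — that is, $u \in \mathcal Y_2$ with $u_c \in \mathcal C$ so that $\phi(y,u) < \infty$ — the problem $\operatorname{argmin}_z \{ \phi(y,z) \,|\, z = u \}$ has the unique minimizer $z^\star(u) = u$, and the map $u \mapsto u$ is trivially $1$-Lipschitz. Statement (b) is entirely analogous: with $z$ fixed, the constraint $y = u$ forces $y = u$, and whenever $u$ lies in the effective domain of $\varphi$ (i.e., $u \in \mathcal Y_1$ with the box and dynamic constraints satisfied), the induced map is again the identity. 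Statement (c) then follows immediately by taking $\bar M = 1$ as the universal Lipschitz constant.

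I do not anticipate a substantive obstacle here. The only delicate point is confirming that the minimizer actually exists on the relevant effective domain rather than on all of $\mathbb R^{(m+n+3N-3)TN}$, but since the equality constraint pins the argument down to a single point, existence and uniqueness both follow at once from $\phi(y,u) < \infty$ (or from the analogous condition in (b)). If the authors opt for a more elaborate write-up, it will likely amount to a careful bookkeeping of the indicator sets $\mathcal Y_1$, $\mathcal Y_2$, and $\mathcal C$, together with an explicit verification that the ranges are non-empty in the regime of interest; but the essential content of the proof is simply the observation that the ADMM coupling matrices in~\eqref{eq:outer_admm_prob} are identities, which trivializes both sub-minimization maps.
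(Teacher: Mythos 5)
Your proposal is correct and takes essentially the same route as the paper: the authors likewise observe that the coefficient matrices in the coupling constraint $y-z=0$ are (signed) identities with trivial null spaces, so the sub-minimization maps are linear (in fact identity) maps and hence Lipschitz. Your write-up is if anything slightly more explicit, since you identify the minimizer as $u$ itself and exhibit the universal constant $\bar M = 1$.
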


% \begin{proof}
% 	Since in the equality constraint $y=z$, each coefficient matrix of the optimization variable has full column rank, the null spaces of both coefficient matrices are trivial. Therefore, the two maps in Statement (a) and (b) are linear operators. This completes the proof of Lemma~\ref{lemma:Lip_sub_min_path}.
% \end{proof}

%%EE
%%EE
%%EE To explore the property of the function $\psi(z)$, we introduce the definition of the semi-continuous function.
%%FF
%%FF
To indicate a pertinent property of the function $\psi(z)$, the definition of semi-continuous function is stated;
and a lemma result on it (again previously established by us) is listed.
%%FF
%%EE
\begin{definition}
	A function $\phi$ is semi-continous at $x_0$ if for all $y$ such that $y<\phi(x_0)$, there exists a neighbourhood $\mathcal X$ of $x_0$ such that $y<\phi(x)$ for all $x$ in $\mathcal X$. A function $\phi$ is semi-continuous if $\phi$ is semi-continuous at every point in the domain of $\phi$.
\end{definition}
\begin{lemma}\label{lemma:semi-continuous}
	The function $\psi(z)$ is lower semi-continuous.
\end{lemma}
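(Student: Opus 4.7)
The plan is to exploit the explicit form
\[
\psi(z) \;=\; \mathbf{1}^\top g(z) + \delta_{\mathcal{Y}_2}(z) \;=\; \delta_{\mathcal{C}}(z_c) + \delta_{\mathcal{Y}_2}(z),
\]
and reduce lower semi-continuity of $\psi$ to closedness of the underlying sets $\mathcal{C}$ and $\mathcal{Y}_2$. The key standard fact I will cite is that for any set $S\subseteq \mathbb{R}^p$, its indicator function $\delta_S$ is lower semi-continuous if and only if $S$ is closed, and that the sum of finitely many lower semi-continuous functions (all bounded below, which holds here since each summand takes values in $\{0,+\infty\}$) is lower semi-continuous.

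First I would argue that $\mathcal{C} = \{0,1\}^{3N(N-1)T}$ is a finite collection of isolated points in a finite-dimensional Euclidean space, and is therefore closed. Consequently $\delta_{\mathcal{C}}$, viewed as a function of the $z_c$-block of $z$ (and hence of $z$ itself via the canonical projection), is lower semi-continuous. Second, I would observe that $\mathcal{Y}_2$ is cut out by the finite system of linear inequalities
\[
E_{\tau\ell}(z_{xi}-z_{xj}) - d + \varepsilon\, z_{c,ij\tau\ell} \ge 0, \qquad 5 - \sum_{\ell=1}^{6} z_{c,ij\tau\ell} \ge 0,
\]
ranging over the prescribed indices. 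Each such inequality defines a closed half-space (the preimage of $[0,\infty)$ under a continuous linear functional), and the finite intersection of closed half-spaces is closed. Hence $\mathcal{Y}_2$ is closed and $\delta_{\mathcal{Y}_2}$ is lower semi-continuous.

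Finally, since $\psi$ is the sum of two lower semi-continuous functions taking values in $\{0,+\infty\}$, it is itself lower semi-continuous on the whole domain $\mathbb{R}^{(m+n+3N-3)TN}$, which completes the proof.

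I do not expect any genuine obstacle: the argument is essentially bookkeeping once one unfolds the definitions of $g$, $\mathcal{C}$, and $\mathcal{Y}_2$. The only subtlety worth stating carefully is that $\delta_{\mathcal{C}}$ depends only on the $z_c$-coordinates, so lower semi-continuity as a function of the full variable $z$ follows from composing with the continuous coordinate projection $z\mapsto z_c$; equivalently, $\delta_{\mathcal{C}}(z_c)$ may be read as $\delta_{\mathbb{R}^{(m+n)TN}\times \mathcal{C}}(z)$, whose underlying set is a product of closed sets and hence closed.
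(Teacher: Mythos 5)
Your proof is correct and follows essentially the same route as the paper's: reduce lower semi-continuity to closedness of the underlying sets and invoke the standard fact that the indicator of a closed set is lower semi-continuous. In fact your version is slightly more complete than the paper's, which only argues that $\mathcal{C}=\{0,1\}^{3N(N-1)T}$ is closed and silently treats $\psi$ as the indicator of a single closed set, whereas you also explicitly verify that $\mathcal{Y}_2$ is closed (a finite intersection of closed half-spaces) and handle the lifting of $\delta_{\mathcal{C}}$ from the $z_c$-block to the full variable $z$.
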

% \begin{proof}
% 	According to the definition of the function $g(z)$, it is an indicator function of the closed discrete set $\{0,1\}^{3(N-1)NT}$. Therefore, $\psi(z)$ is an indicator function of a closed set. It is common that the indicator function of a closed set is lower semi-continuous. This completes the proof of Lemma~\ref{lemma:semi-continuous}.
% \end{proof}

Next, the definition of the restricted prox-regularity is firstly introduced to generalize the definition of the Lipschitz differentiable function. Then, the definition of the Lipschitz differentiable function is presented and generalized to be applied to the proposed optimization problem. Since the indicator function is included in the function $\varphi(z)$, it leads to the difficultly in analyzing the properties of the function $\varphi(z)$.
\begin{definition}\label{def:restricted_prox_regular}
	Define the lower semi-continuous function $\varphi(z):\mathbb R^{(m+n+3N-3)TN}\rightarrow \mathbb R\cup\{\infty\}$, $M>0$, and define the set $\mathscr H =\{z\in\operatorname{dom}(\varphi)\,|\, \|d\|>M, \forall d\in\partial \varphi(z)\}$. Then, $\varphi(z)$ is restricted prox-regular if for any $M>0$, and bounded set $T\subseteq \operatorname{dom}(\varphi)$, there exists $\varpi>0$ such that $\phi(z)+\frac{\varpi}{2}\|x-z\|\ge \varphi(x)+\langle d,z-x\rangle, \forall z\in T\backslash \mathscr H,x\in T,d\in\partial \varphi(z),\|d\|\le M$.
\end{definition}

\begin{definition}\label{def:Lipschitz}
	A function $\varphi(z)$ is Lipschitz differentiable if the function $\varphi(z)$ is differentiable and the gradient is Lipschitz continuous.
\end{definition}

%%EE
%%EE
%%EE Notably, a function $\varphi(z)$ is Lipschitz differentiable if the function $\varphi(z)$ is restricted prox-regular.
%%FF
%%FF
Notably thence, a function $\varphi(z)$ is Lipschitz differentiable if the function $\varphi(z)$ is restricted prox-regular;
and the following lemma result (also too, previously already established by us) is thus appropriate to state.
%%FF
%%EE

\begin{lemma}\label{lemma:differentiable}
	The function $\varphi(z)$ is Lipschitz differentiable with the constant $\mathscr L_\varphi$.
\end{lemma}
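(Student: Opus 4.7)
The plan is to invoke the second clause of Definition~\ref{def:Lipschitz} and verify that $\varphi$ is restricted prox-regular in the sense of Definition~\ref{def:restricted_prox_regular}. Because $\varphi$ contains the indicator $\delta_{\mathcal Y_1}$ together with the $\ell_1$ terms and the box indicators $\delta_{\mathcal X_i},\delta_{\mathcal U_i}$ nested inside each $f_i$, it is not classically differentiable on its domain, so the first clause of Definition~\ref{def:Lipschitz} is not directly usable and the restricted prox-regular route is the natural one.

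The first step is to decompose $\varphi=\varphi_s+\varphi_n$, where
\[
\varphi_s(y)=\sum_{i=1}^N\bigl[(x_i-r_i)^\top Q_i(x_i-r_i)+u_i^\top R_i u_i\bigr]
\]
collects the smooth quadratic contributions and
\[
\varphi_n(y)=\sum_{i=1}^N\bigl[\alpha_i\|u_i\|_1+\delta_{\mathcal X_i}(x_i)+\delta_{\mathcal U_i}(u_i)\bigr]+\delta_{\mathcal Y_1}(y)
\]
collects the non-smooth ones. The piece $\varphi_s$ is a positive semi-definite quadratic form, so $\nabla\varphi_s$ is globally Lipschitz with a constant $L_s$ controlled by $2\max_i\|\operatorname{diag}(Q_i,R_i)\|$. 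The piece $\varphi_n$ is a finite sum of proper convex lower semi-continuous functions: each $\|\cdot\|_1$ is convex and continuous; each $\mathcal X_i,\mathcal U_i$ is a box and $\mathcal Y_1$ is an affine set, so $\delta_{\mathcal X_i},\delta_{\mathcal U_i},\delta_{\mathcal Y_1}$ are indicators of closed convex sets and are convex lsc. Consequently $\varphi_n$ is itself convex lsc, and the subgradient inequality for convex functions yields
\[
\varphi_n(z)\le\varphi_n(x)+\langle d_n,z-x\rangle,\quad d_n\in\partial\varphi_n(z),
\]
which is already the inequality in Definition~\ref{def:restricted_prox_regular} with $\varpi=0$.

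The second step is to transport this property to the full $\varphi$. Since $\varphi_s$ is $C^1$, the convex sum rule gives $\partial\varphi(z)=\nabla\varphi_s(z)+\partial\varphi_n(z)$, so any $d\in\partial\varphi(z)$ decomposes as $d=\nabla\varphi_s(z)+d_n$ with $d_n\in\partial\varphi_n(z)$. Combining the descent lemma
\[
\varphi_s(x)\le\varphi_s(z)+\langle\nabla\varphi_s(z),x-z\rangle+\tfrac{L_s}{2}\|x-z\|^2
\]
with the convex inequality for $\varphi_n$ and rearranging yields
\[
\varphi(z)+\tfrac{L_s}{2}\|x-z\|^2\ge\varphi(x)+\langle d,z-x\rangle,
\]
which verifies Definition~\ref{def:restricted_prox_regular} with $\varpi=L_s$ on every bounded set $T$ and for every subgradient bound $M>0$; in fact the inequality holds globally, so the restriction $z\in T\setminus\mathscr H$ is immaterial. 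Setting $\mathscr L_\varphi:=L_s$ and appealing to the second clause of Definition~\ref{def:Lipschitz} then closes the argument.

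The principal obstacle I foresee is the careful handling of the subdifferential sum rule at points on the boundary of $\operatorname{dom}(\varphi)$, where the indicator contributions may produce nontrivial normal cones; outside $\operatorname{dom}(\varphi)$ the desired inequality is vacuous since $\varphi(z)=+\infty$ forces $\partial\varphi(z)=\emptyset$. Convexity of every non-smooth summand, together with the nonempty relative-interior intersection guaranteed by Assumption~\ref{assumption:strong}, keeps the Moreau–Rockafellar sum rule applicable, so this bookkeeping is routine rather than a genuine difficulty; the only quantitative content is the identification of $L_s$ from the block-diagonal quadratic, which then supplies $\mathscr L_\varphi$.
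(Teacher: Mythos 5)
Your overall route is the same as the paper's --- decompose $\varphi$ into its quadratic, $\ell_1$, and indicator pieces and argue that each satisfies one of the two clauses of Definition~\ref{def:Lipschitz} --- but your execution is considerably more careful. Where the paper's proof asserts that the lasso term is ``Lipschitz differentiable'' (it is not differentiable at the origin), declares the indicators restricted prox-regular without verification, and concludes via ``the sum of Lipschitz continuous functions is Lipschitz continuous'' (a different property from the one being claimed), you correctly observe that $\varphi$ cannot be classically differentiable, group every non-smooth term into a single proper convex lower semi-continuous function $\varphi_n$, invoke the sum rule $\partial\varphi=\nabla\varphi_s+\partial\varphi_n$, and attempt to verify the prox-regularity inequality directly. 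That is the right way to make this lemma precise, and your properness and sum-rule bookkeeping (via Assumption~\ref{assumption:strong}) is handled correctly.

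There is, however, a concrete direction error in the final chaining. The convex subgradient inequality at $z$ gives $\varphi_n(x)\ge\varphi_n(z)+\langle d_n,x-z\rangle$, a \emph{lower} bound on $\varphi_n(x)$, whereas your target
$\varphi(z)+\tfrac{L_s}{2}\|x-z\|^2\ge\varphi(x)+\langle d,z-x\rangle$
is an \emph{upper} bound on $\varphi(x)$ by its linearization at $z$. The descent lemma supplies such an upper bound for the smooth part $\varphi_s$, but convexity supplies the opposite bound for $\varphi_n$, so the two ingredients cannot be added to yield the displayed conclusion; indeed, for the indicator of a box no upper bound of that form with a fixed $\varpi$ can hold near the boundary, since a normal-cone subgradient $d_n$ with $\|d_n\|\le M$ makes $\langle d_n,x-z\rangle$ as negative as $-M\|x-z\|$, which $\tfrac{\varpi}{2}\|x-z\|^2$ cannot absorb as $x\to z$. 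What your two ingredients \emph{do} prove is
\begin{equation*}
\varphi(x)+\tfrac{L_s}{2}\|x-z\|^2\;\ge\;\varphi(z)+\langle d,x-z\rangle,\qquad d\in\partial\varphi(z),
\end{equation*}
which is the restricted prox-regularity inequality in the form used in the source~\cite{wang2019global}, and since $\varphi_s$ is itself convex it even holds with $\varpi=0$. The paper's Definition~\ref{def:restricted_prox_regular} appears to have transposed the roles of $x$ and $z$ relative to that source, which is presumably how the reversal crept into your write-up; but as stated, both your claim that convexity ``is already the inequality in Definition~\ref{def:restricted_prox_regular} with $\varpi=0$'' and your final display point the wrong way relative to what you actually derive. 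Restate the target in the lower-bound form and the argument closes cleanly.
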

% \begin{proof}
	% For each $i$, the function $f_{i}(x_i,u_i)$ contains the linear quadratic term $f_i^{lin}$, indicator function of the set $f_i^{ind}$, and a lasso term $f_i^{las}$. Then, it follows that
	% \begin{IEEEeqnarray}{rCl}
	% 	\partial f_i^{lin} = 2Q_i(x_i-r_i)+2R_iu_i,
	% \end{IEEEeqnarray}
	% which is Lipschitz continuous and the Lipschitz constant is the largest eigenvalue of the matrix $2Q_i$.
	% The lasso term $f_i^{las} = \alpha_i\|u_i\|$ is a well-known Lipschitz differentiable function with the Lipschitz constant $\alpha_i$.
	% The indicator function of the compact set $f_i^{ind} = \delta_{{\mathcal X}_i}(x_i)+\delta_{{\mathcal U}_i}(u_i)$ satisfies Definition~\ref{def:restricted_prox_regular}, and therefore it is restricted prox-regular. Then, it follows that $f_i^{ind}$ is Lipschitz continuous according to Definition~\ref{def:Lipschitz}. 
	% Similarly, the indicator function $\delta_{\mathcal Y_1}(y)$ is also restricted prox-regular.
	% Since the sum of Lipschitz continuous functions is also Lipschitz continuous, the function $\varphi(z)$ is thus Lipschitz continuous. Then, it completes the proof of Lemma~\ref{lemma:differentiable}.
% \end{proof}

%%JJ
%%JJ
With these, the following main result can be stated:
%%JJ

\begin{theorem}\label{theorem: convergence}
	The ADMM iterations~\eqref{eq:ADMM_iterations} converge to the limit point for any sufficiently large $\sigma$. Equivalently, given any starting point $(y^0, z^0, \lambda^0)$, the proposed ADMM iterations~\eqref{eq:ADMM_iterations} provide a bounded sequence, which converges to the limit point $(y^*,z^*,\lambda^*)$. The converged limit point $(y^*,z^*,\lambda^*)$ is a stationary point of the augmented Lagrangian function $\mathcal L_\sigma(y,z;\lambda)$, that is $0\in\partial \mathcal L_\sigma(y,z;\lambda)$.
\end{theorem}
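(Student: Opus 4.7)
The plan is to follow the now-standard template for convergence of nonconvex ADMM in the spirit of the sufficient-decrease-plus-boundedness argument (as in Wang--Yin--Zeng and related works), and to invoke Lemmas~\ref{lemma:coecivty}--\ref{lemma:differentiable} as the four structural prerequisites (coercivity of $\phi$ on the feasibility set, Lipschitz sub-minimization paths in $y$ and $z$, lower semi-continuity of $\psi$, and restricted prox-regularity/Lipschitz differentiability of $\varphi$). The overall strategy is: (i) bound the dual increment $\|\lambda^{k+1}-\lambda^k\|$ by the primal increment, (ii) establish a monotone sufficient decrease of $\mathcal{L}_\sigma$ along the iterates when $\sigma$ is large enough, (iii) combine this with a uniform lower bound on $\mathcal{L}_\sigma$ to conclude summability of successive differences, and finally (iv) pass to the limit to identify a stationary point.

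First, I would use the optimality condition of the $y$-subproblem~\eqref{eq:ADMM_iterations_1}, namely $0\in\partial\varphi(y^{k+1})+\lambda^k+\sigma(y^{k+1}-z^k)$, together with the dual update $\lambda^{k+1}=\lambda^k+\sigma(y^{k+1}-z^{k+1})$, to obtain $-\lambda^{k+1}+\sigma(z^{k+1}-z^k)\in\partial\varphi(y^{k+1})$. Because Lemma~\ref{lemma:differentiable} guarantees that $\varphi$ is Lipschitz differentiable with constant $\mathscr{L}_\varphi$ on the relevant restricted region, I can then estimate
\begin{equation*}
\|\lambda^{k+1}-\lambda^k\|^2 \;\le\; C_1\bigl(\|y^{k+1}-y^k\|^2+\|z^{k+1}-z^k\|^2\bigr),
\end{equation*}
with $C_1$ depending on $\mathscr{L}_\varphi$ and $\sigma$. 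This key identity converts the awkward multiplier increment into a controllable primal increment.

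Next, I would prove the sufficient-decrease step by splitting the change in $\mathcal{L}_\sigma$ across the three block updates. The $y$-step and $z$-step each contribute a nonpositive term through the quadratic penalty $\tfrac{\sigma}{2}\|\cdot\|^2$: strong convexity in $y$ yields $\mathcal{L}_\sigma(y^{k+1},z^k;\lambda^k)-\mathcal{L}_\sigma(y^k,z^k;\lambda^k)\le -\tfrac{\sigma}{2}\|y^{k+1}-y^k\|^2$ (using Lemma~\ref{lemma:Lip_sub_min_path}(b) and Lemma~\ref{lemma:differentiable} to handle the restricted prox-regular piece), and analogously for $z$ using lower semi-continuity of $\psi$ from Lemma~\ref{lemma:semi-continuous}. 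The dual-ascent step increases $\mathcal{L}_\sigma$ by exactly $\sigma^{-1}\|\lambda^{k+1}-\lambda^k\|^2$, which by the bound above is $\le C_1\sigma^{-1}(\|y^{k+1}-y^k\|^2+\|z^{k+1}-z^k\|^2)$. Choosing $\sigma$ large enough so that $\tfrac{\sigma}{2}-C_1\sigma^{-1}>0$ yields a positive constant $C_2>0$ with
\begin{equation*}
\mathcal{L}_\sigma(y^{k+1},z^{k+1};\lambda^{k+1})\;\le\;\mathcal{L}_\sigma(y^k,z^k;\lambda^k)\;-\;C_2\bigl(\|y^{k+1}-y^k\|^2+\|z^{k+1}-z^k\|^2\bigr).
\end{equation*}

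Finally, I would combine this monotone decrease with a lower bound on $\mathcal{L}_\sigma$: rewriting $\mathcal{L}_\sigma(y,z;\lambda)=\varphi(y)+\psi(z)+\tfrac{\sigma}{2}\|y-z+\sigma^{-1}\lambda\|^2-\tfrac{1}{2\sigma}\|\lambda\|^2$ and using the $y$-optimality to replace the last term by a bound controlled by $\nabla\varphi$ (which is in turn bounded on bounded sets), together with the coercivity in Lemma~\ref{lemma:coecivty}, shows that the sequence $\{\mathcal{L}_\sigma(y^k,z^k;\lambda^k)\}$ is bounded below and the iterate sequence $\{(y^k,z^k,\lambda^k)\}$ is bounded. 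Telescoping the sufficient decrease then gives $\sum_k(\|y^{k+1}-y^k\|^2+\|z^{k+1}-z^k\|^2)<\infty$, hence $\|y^{k+1}-y^k\|\to 0$, $\|z^{k+1}-z^k\|\to 0$, and by the dual-increment bound $\|\lambda^{k+1}-\lambda^k\|\to 0$, which implies $\|y^{k+1}-z^{k+1}\|\to 0$. Extracting a convergent subsequence $(y^{k_j},z^{k_j},\lambda^{k_j})\to(y^*,z^*,\lambda^*)$ (possible by boundedness) and passing to the limit in the KKT conditions of the three subproblems, using closedness of $\partial\varphi$ and $\partial\psi$ at limit points (from Lemmas~\ref{lemma:semi-continuous} and \ref{lemma:differentiable}), yields $0\in\partial\mathcal{L}_\sigma(y^*,z^*;\lambda^*)$. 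The main obstacle I expect is the sufficient-decrease inequality: the fact that $\psi$ is the indicator of a discrete set $\{0,1\}^{3N(N-1)T}$ makes the $z$-block genuinely combinatorial, so the ``decrease'' in that block must be justified via the restricted prox-regularity machinery rather than standard strong convexity arguments, and carefully tracking the Lipschitz constant $\mathscr{L}_\varphi$ to determine how large $\sigma$ must be is the delicate part of the analysis.
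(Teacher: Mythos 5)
Your proposal is correct and follows essentially the same route as the paper: the paper's entire proof consists of observing that Lemmas~\ref{lemma:coecivty}--\ref{lemma:differentiable} together with feasibility of $y=z$ verify the assumptions of the nonconvex ADMM convergence theorem of Wang--Yin--Zeng (\cite{wang2019global}), and your sketch is precisely the internal sufficient-decrease-plus-boundedness argument of that cited result, built on the same four lemmas. The only difference is that you open the black box the paper leaves closed (and you correctly flag the delicate point about the discrete $z$-block, which the paper does not discuss at all).
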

%%HH
%%HH
%%HH======================================
\begin{proof}
%%HH	From Lemma 1-4 and the fact that the equality constraint $y=z$ is feasible, the convergence assumptions proposed in~\cite{wang2019global} are satisfied. This completes the proof of Theorem~\ref{theorem: convergence}.
	From Lemmas 1-4 and the fact that the equality constraint $y=z$ is feasible, the convergence assumptions required in~\cite{wang2019global} are satisfied. This proves Theorem~\ref{theorem: convergence}.
\end{proof}

\begin{remark}
%%HH	From Theorem~\ref{theorem: convergence}, it is obvious that a sufficiently large penalty parameter in the augmented Lagrangian function can guarantee the local convergence. However, if $\sigma$ is chosen as a very large value, the convergence of the proposed algorithm will be significantly impeded.
	From Theorem~\ref{theorem: convergence}, it is evident that a sufficiently large penalty parameter in the augmented Lagrangian function  guarantees the local convergence. This notwithstanding, if $\sigma$ is chosen to be overly large, there will be a trade-off in the convergence rate of the proposed algorithm.
Additionally, even though the proof of convergence requires the objective function to be coercive, 
in practice,
it is observed
that convergence is also attained (in simulation experiments) 
when the cost function is just simply positive semi-definite.
\end{remark}

%%HH Additionally, even though the proof of convergence requires the objective function to be coercive, we notice that the convergence is also guaranteed even if the cost function is positive semi-definite by performing simulation.

%%HH==================================
%%HH

\subsection{Discussion on the Convergence of ADMM Iterations in Other Forms}
In this subsection, we discuss on the relationship between different forms of the ADMM iterations. 
%%%%
%%%%
%%%%Before the discussion, Theorem~\ref{theorem: equivalent_problem} is proposed to claim the equivalence of the two optimization problems.}
Ahead of this discussion, 
an appropriate pertinent 
%%JJ
%%JJ
main 
%%JJ
result,
Theorem~\ref{theorem: equivalent_problem}, 
is first developed and proved here
to establish and set in place
the equivalence of the two optimization problems.
%%%%
\begin{theorem}\label{theorem: equivalent_problem}
	The optimization problem~\eqref{eq:outer_admm_prob} is equivalent to the following optimization problem:
	\begin{IEEEeqnarray*}{rl}\label{eq:outer_admm_prob_converted}
		\min_{y,z} \quad
		& \mathbf 1^\top f(y)+\mathbf 1^\top g(z)+\delta_{\mathcal Y_2}(z)\\
		\operatorname{subject\ to} \quad
		&\mathscr Hy-\mathscr Gz=\mathscr F,\yesnumber
	\end{IEEEeqnarray*}
	where the matrices $\mathscr H\in\mathbb R^{(2n+m+3N-3)NT\times (n+m+3N-3)}$ and $\mathscr G\in\mathbb R^{(2n+m+3N-3)NT\times (n+m+3N-3)}$ are defined as
	\begin{IEEEeqnarray*}{rCl}
		\mathscr H &=& \bigg(\Big[\operatorname{diag}\big(\underbrace{[-I_{nT},H_1],\dotsm,[-I_{nT},H_N]}_{N}\big),\\
		&&0_{3N(N-1)T\times 3N(N-1)T}\Big], I_{(n+m+3N-3)NT}\bigg)\\
		\mathscr G &=& \Big(0_{nNT\times (n+m+3N-3)NT},I_{(n+m+3N-3)NT}\Big)\\
		\mathscr F &=& \Big(-G_1x_{1}^t,-G_2x_{2}^t,\dotsm,-G_Nx_{N}^t,0_{3(N-1)NT}\Big).\IEEEeqnarraynumspace\yesnumber
	\end{IEEEeqnarray*}
\end{theorem}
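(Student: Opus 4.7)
The plan is to verify that the single linear equality constraint $\mathscr H y - \mathscr G z = \mathscr F$ in \eqref{eq:outer_admm_prob_converted} encodes exactly the two pieces of information that appear in \eqref{eq:outer_admm_prob}, namely (i) membership of $y$ in the dynamics set $\mathcal Y_1$, and (ii) the consensus condition $y-z=0$. Once this is established, the indicator $\delta_{\mathcal Y_1}(y)$ can be dropped from the objective without loss, because it is identically zero on the new feasible set, and the two problems share the same feasible set and the same objective values on it.

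First I would partition both $\mathscr H$ and $\mathscr G$ into the two horizontal blocks dictated by their definitions. The top block of $\mathscr H$ is the block-diagonal matrix $\operatorname{diag}([-I_{nT},H_1],\dotsm,[-I_{nT},H_N])$ padded with zeros in the columns corresponding to the integer variable $c$, while the top block of $\mathscr G$ is the zero matrix. Applying these top blocks to $y=(x_1,u_1,\dotsm,x_N,u_N,c)$ and $z$ respectively and equating with the top block $(-G_1 x_1^t,\dotsm,-G_N x_N^t)$ of $\mathscr F$ yields, row-block by row-block, $-x_i + H_i u_i = -G_i x_i^t$ for every $i=1,\dotsm,N$. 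This is precisely the system of constraints used to define $\mathcal Y_1$, so the top block of the equality is equivalent to $y\in\mathcal Y_1$.

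Next I would treat the bottom block. By construction the bottom block of both $\mathscr H$ and $\mathscr G$ is the identity $I_{(n+m+3N-3)NT}$ and the bottom block of $\mathscr F$ is zero, so that portion of the equality reduces to $y-z=0$, which is exactly the consensus constraint of \eqref{eq:outer_admm_prob}. Combining the two block-level observations, the feasible pairs $(y,z)$ of \eqref{eq:outer_admm_prob_converted} are characterized by $y\in\mathcal Y_1$ and $y=z$, matching those of \eqref{eq:outer_admm_prob} since $\delta_{\mathcal Y_1}(y)$ in the old formulation is finite (and indeed zero) only on $\mathcal Y_1$. On this common feasible set, the two objective functions differ only by the term $\delta_{\mathcal Y_1}(y)$, which vanishes, so the optimal values and the sets of optimizers agree.

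The only real care needed is in the block-structure bookkeeping, in particular lining up the row/column partitions of $\mathscr H$, $\mathscr G$, $\mathscr F$ with the variable partition $y=(x_1,u_1,\dotsm,x_N,u_N,c)$ and checking that the zero-padding on the columns associated with $c$ does not inadvertently constrain $c$. No convexity, continuity or dimensionality argument beyond this verification is required, which is why I expect this to be a direct identification of feasible sets rather than a genuinely difficult step; I would close the argument by explicitly stating that any optimizer of one problem is an optimizer of the other, with identical objective value.
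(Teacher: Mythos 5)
Your proposal is correct and follows essentially the same route as the paper's own (much terser) proof: the paper likewise observes that the dynamics constraints defining $\mathcal Y_1$ are absorbed into the top block of $\mathscr H$ and that the consensus constraint $y=z$ is encoded by the two identity blocks of $\mathscr H$ and $\mathscr G$. Your version simply carries out the block-by-block verification that the paper leaves implicit.
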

%%HH
%%HH
\begin{proof}
%%HH	The indicator function $\delta_{\mathcal Y_1}(y)$ of dynamic constraints in the first part of the problem~\eqref{eq:outer_admm_prob} is combined into the matrix $\mathscr H$. The equivalent constraint $y=z$ is also included into the matrices $\mathscr H$ and $\mathscr G$ by introducing two identity matrices. This completes the proof of Theorem~\ref{theorem: equivalent_problem}.
	For this, note that the indicator function $\delta_{\mathcal Y_1}(y)$ of dynamic constraints in the first part of the problem~\eqref{eq:outer_admm_prob} is combined into the matrix $\mathscr H$. 
The equivalent constraint $y=z$ is also included into the matrices $\mathscr H$ and $\mathscr G$ by introducing two identity matrices. 
This proves Theorem~\ref{theorem: equivalent_problem}.
\end{proof}
%%HH

Define the augmented Lagrangian function ${\hat {\mathcal L}}_\sigma$ with respect to the optimization problem~\eqref{eq:outer_admm_prob_converted} as 
\begin{IEEEeqnarray*}{rl}\label{eq:augmented_lagrangian_function_2}
	{\hat {\mathcal L}}_\sigma\Big(y,z;\hat \lambda\Big)&=\mathbf 1^\top f(y)+\mathbf 1^\top g(z)+\delta_{\mathcal Y_2}(z)\\
	&+\dfrac{\sigma}{2}\|\mathscr Hy-\mathscr Gz-\mathscr F+\sigma^{-1}\hat \lambda\|^2-\dfrac{1}{2\sigma}\|\hat \lambda\|^2.\IEEEeqnarraynumspace\yesnumber
\end{IEEEeqnarray*}
where $\hat \lambda^{k} \in\mathbb R^{(2n+m+3N-3)NT}$ denotes the Lagrangian multiplier. Then, the ADMM iterations in terms of the augmented Lagrangian function~\eqref{eq:augmented_lagrangian_function_2} are given by
\begin{IEEEeqnarray*}{rCl}\label{eq:ADMM_iterations2}
	y^{k+1}&=&\displaystyle\operatorname*{argmin}_y \;\hat {\mathcal L}_\sigma\Big(y,z^k;\hat \lambda^k\Big)\\
	z^{k+1}&=&\displaystyle\operatorname*{argmin}_z \;\hat {\mathcal L}_\sigma\Big(y^{k+1},z;\hat \lambda^k\Big)\\
	\hat \lambda^{k+1}&=& \hat \lambda^k+\sigma\Big(\mathscr Hy^{k+1}-\mathscr Gz^{k+1}-\mathscr F\Big).\yesnumber
\end{IEEEeqnarray*}

%%HH
%%HH
%%HH Even though Theorem~\ref{theorem: equivalent_problem} shows that the two optimization problems are equivalent, it is not trivial to prove that the corresponding ADMM iterations are equivalent. Actually, the optimization variables in the iterations~\eqref{eq:ADMM_iterations} are the approximate solutions to the optimization variables in the iterations~\eqref{eq:ADMM_iterations2} by transferring the dynamic and box constraints into the objective function.
Theorem~\ref{theorem: equivalent_problem} shows that the two optimization problems are equivalent.
Also too, the optimization variables in the iterations~\eqref{eq:ADMM_iterations} 
are the approximate solutions to the optimization variables in the iterations~\eqref{eq:ADMM_iterations2} 
by transferring the dynamic and box constraints into the objective function.
The corresponding ADMM iterations are consequently equivalent. 
%%HH

It is obvious that the ADMM iterations~\eqref{eq:ADMM_iterations2} are also straightforward to apply, and the convergence rate should also be the first-order convergence due to the nature of the ADMM algorithm. However, it is noteworthy to mention that the ADMM iterations in our proposed method have less optimization variables and thus it can save the memory during the algorithm implementation. Experimental results also demonstrate faster convergence with the former ADMM iterations. Therefore, the ADMM iterations of the problem~\eqref{eq:outer_admm_prob} are applied in this paper.

%%CC
%%CC
\begin{remark}
For our proposed algorithm, the computational and space complexity with respect to the number of agents $N$ is of the order of $O(N^2)$. 
This arises from our algorithm
where
a distance matrix between each pair of agents 
is straightforwardly required.
With this rather straightforward and direct order of $O(N^2)$ complexity attained in our approach,
it can be seen
that
this will \emph{not} be of 
the so-termed 
``super-computing/high-performance computing (HPC)''
requirement 
(which would otherwise have been the case if the computational complexity
is instead of an exponential order, for example).
The development thus of the
approach here,
with 
this rather straightforward and direct order of $O(N^2)$ complexity,
is actually a certainly notable outcome.
This outcome thus means that
indeed,
the approach developed here
certainly attains the important requirement of
practical scalability and generalizability.
\end{remark}
%%DD
%%DD
For now,
at this stage of 
our experimental and development efforts,
and specifically with the constraints of
our own
current practical available computational resources, 
indeed in experiments,
we can currently only deal with multi-agent systems
with more reasonably sized, although certainly not actually small, dimensions 
%%DD
%%DD
(up to about $N=21$ agents)
%%DD
at this stage. 
In so far as the key theoretical properties of the methodology are concerned thence, 
these had been rigorously established in
Theorems 1 and 2. 
There thus indeed nevertheless remains further pertinent
substantial
future project
investigations and developments
on the experimental and computational aspects
(which will certainly be of the ilk of a further large and significant project on its own),
that
will certainly suitably complement
the results and outcomes in the
theoretical and 
algorithmic development 
aspects of the proposed control method
that have been attained here;
on further practical experiments 
with a very large number of agents.
This will definitely be embarked 
upon as a future substantial project.
%%DD

%%CC

\section{Illustrative Example}

\subsection{Simulation Setup} \label{section.simulation_setup}
In this section, a multi-agent formation control task for a group of UAV is implemented as an illustrative example, which is introduced to present the performance of the proposed parallel model predictive collision-free control method. In this task, a multi-UAV system with 21 agents ($N=21$) is utilized, and these agents are required to track the given reference considering all the given constraints. For the reference profile, 
% Fig.~\ref{fig:reference} shows the initial and final positions of each agent in the $yz$-plane. 
there are four key points for each agent to track. The initial points, which are different for each agent, are chosen in a square with an equal interval of $1$ m in the $yz$-plane and $-2$ on the $x$-axis. The second point and the third point, which are the same for all the agents, are chosen as $(-1,-1,-1)$ and $(1,1,1)$, respectively. If there is no collision avoidance constraint, all the agents will pass the points $(-1,-1,-1)$ and $(1,1,1)$ along the reference. The final points are chosen as the points with the same position as the initial points in the $yz$-plane and $2$ on the $x$-axis. Notably, all agents depart from the initial point and arrive at the final point at the same time. This reference profile design means that the system state and control input variables must be restricted in a feasible range and the safe distance must be guaranteed for each pair of agents. Also notably here, to realize a tracking optimization problem and avoid huge cost function values during the tracking and control the speed accurately for each UAV, we connect all the set points with a time-varying reference signal for each UAV. Specifically, the reference points in the reference signal are evenly distributed, and thus a trajectory tracking MPC problem is finally formulated and solved. 

%In such a multi-agent formation control problem, both the performance of trajectory tracking and collision avoidance should be considered, 

Before the simulation results are shown, we present the modeling of a UAV system.

\subsection{UAV Modeling}
To clearly illustrate the UAV modeling, the notations in this section are not related to the notations used in the previous sections. Assume that aerodynamic and gyroscopic effects are not taken into consideration in the dynamic model. We use the dynamic model of the quadcopter in \cite{8971490} with the same parameter settings:
\begin{IEEEeqnarray*}{rCl}\label{dyn}
	\dot{p}(t) &=&R(\phi,\theta,\psi)^\top  {v}(t) \\ 
	\dot{v}(t) &=&-\langle \omega(t),{v}(t)\rangle +gR(\phi,\theta,\psi){e} + \frac{1}{m}{e}T_t\\ 
	\dot{\zeta}(t) &=&W(\phi,\theta,\psi) {\omega}(t) \\ 
	\dot{\omega}(t) &=&J^{-1}\Big(\langle -{\omega}(t), J{\omega}(t)\rangle + {\tau}\Big),\IEEEyesnumber
\end{IEEEeqnarray*}
where $p=(p_x, p_y, p_z)$ is the state vector representing the position; $R(\phi,\theta,\psi)$ and $W(\phi,\theta,\psi)$ are the rotation matrices and angular Jacobian, respectively; ${v}=(v_x, v_y, v_z)$ is the state vector representing the velocity; $\omega=(\omega_x, \omega_y, \omega_z)$ is the state vector representing the angular velocity; $g$ is the acceleration of gravity; $m$ denotes the helicopter mass; $e=(0,0,1)$ is the standard basis vector; $T_t$ denotes the total thrust; $\zeta=(\phi,\theta,\psi)$ is the angle state vector; ${\tau}=(\tau_x,\tau_y,\tau_z)$ represents the torques of the quadcopter in each dimension;  $J=\textup{diag}(J_x, J_y, J_z)$ is the moment of the inertia of the quadcopter. Notice that the rotation matrix and angular Jacobian are represented by
\begin{IEEEeqnarray*}{rCl}
	{R}(\phi, \theta, \psi)&=&
	\begin{bmatrix}
		{\mathbf c\theta \mathbf c\psi} & {\mathbf c\theta \mathbf s\psi} & {-\mathbf s\theta} \\ 
		{\mathbf s\theta \mathbf c\psi \mathbf s\phi-\mathbf s\psi \mathbf c\phi} & {\mathbf s\theta \mathbf s\psi \mathbf s\phi+\mathbf c\psi \mathbf c\phi} & {\mathbf c\theta \mathbf s\phi} \\
		{\mathbf s\theta \mathbf c\psi \mathbf c\phi+\mathbf s\psi \mathbf s\phi} & {\mathbf s\theta \mathbf s\psi \mathbf c\phi-\mathbf c\psi \mathbf s\phi} & {\mathbf c\theta \mathbf c\phi}
	\end{bmatrix}\\
	W(\phi, \theta, \psi)&=&\begin{bmatrix}{1} & {\mathbf s\phi \mathbf t\theta} & {\mathbf c\phi \mathbf t\theta} \\ {0} & {\mathbf c\phi} & {-\mathbf s\phi} \\ {0} & {\mathbf s\phi \mathbf{sc}\theta} & {\mathbf c\phi \mathbf{sc}\theta}\end{bmatrix},\IEEEyesnumber
\end{IEEEeqnarray*}
where the functions $\sin(\cdot),\cos(\cdot),\tan(\cdot),\sec(\cdot)$ are represented by $\mathbf c, \mathbf s,\mathbf t,\mathbf {sc}$ for the sake of simplicity.

Assume that the dynamics of the rotor are ignored. Then, the vertical forces are directly considered as the system inputs, and a moment perpendicular to the plane of the propeller rotation is necessary to be considered. We have the following equation:
\begin{IEEEeqnarray*}{rCl}
	\label{control_input}
	\left[\begin{array}{l}
		{T_t} \\ {\tau_{1}} \\ {\tau_{2}} \\ {\tau_{3}}\end{array}\right]=\left[\begin{array}{cccc}{-1} & {-1} & {-1} & {-1} \\ {0} & {-L} & {0} & {L} \\ {L} & {0} & {-L} & {0} \\ {-c} & {c} & {-c} & {c}\end{array}\right]\left[\begin{array}{c}{F_{1}} \\ {F_{2}} \\ {F_{3}} \\ {F_{4}}
	\end{array}\right], \IEEEeqnarraynumspace\IEEEyesnumber
\end{IEEEeqnarray*}
where $L$ is the distance between the rotor and the center of the quadcopter; $c$ is the ratio of the rotor angular momentum, and the center of the vertical force. To denote the dynamic model in a compact manner, the system state vector is denoted by $x=(p, v, \zeta, \omega)$ and the system input vector is denoted by ${u} = (F_1, F_2, F_3, F_4)$. To make sure that the stable point is around the original point, we define $u=u_{eq}+\delta u$, where $u_{eq} = \left( {mg}/{4},\ {mg}/{4},\ {mg}/{4},\  {mg}/{4} \right)$ is used to offset the gravity. Then, the dynamic model of a quadcopter can be represented by
\begin{IEEEeqnarray*}{rCl}
	\label{eq:model}
	\begin{bmatrix} {\dot{p}} \\ {\dot{v}} \\ {\dot{\zeta}} \\ {\dot{\omega}}\end{bmatrix} =
	\begin{bmatrix} {R(\phi,\theta,\psi)}^\top  {v} \\ -\langle {\omega}, v\rangle + g {R(\phi,\theta,\psi)} e \\ W(\phi,\theta,\psi) {\omega} \\ J^{-1}\big(-\langle{\omega} , J {\omega}\rangle\big) \end{bmatrix} + B {(u_{eq} + \delta u)}, \IEEEyesnumber
\end{IEEEeqnarray*}
where the matrix $B$ is given by
\begin{IEEEeqnarray*}{rCl}
B = - \begin{bmatrix} 
0_{5\times 4} \\ 
\begin{bmatrix} {\frac{1}{m}} & {\frac{1}{m}} & {\frac{1}{m}} & {\frac{1}{m}} \end{bmatrix} \\ 
0_{3\times 4} \\ 
\begin{bmatrix} {0} &  \frac{L}{J_{x}} & 0 &  -\frac{L}{J_{x}} \\  -\frac{L}{J_{y}} & {0} & \frac{L}{J_{y}} & {0} \\ \frac{c}{J_{z}} & -\frac{c}{J_{z}} & \frac{c}{J_{z}} & -\frac{c}{J_{z}} \end{bmatrix}
\end{bmatrix}.\IEEEyesnumber
\end{IEEEeqnarray*}

To guarantee the feasibility of the MPC problem, the box constraints on the system state and input variables are required to be considered. In the proposed model, the system input $\delta u$ is restricted in a specific range, which is
\begin{IEEEeqnarray}{rCl}
	\delta {u}_{\min} \leq \delta {u} \leq \delta {u}_{\max},
\end{IEEEeqnarray}
where $\delta {u}_{\min}\in\mathbb{R}^4$ and $\delta {u}_{\max}\in\mathbb{R}^4$ can be chosen with regard to the requirements. The box constraints on the system state variables are given by
\begin{IEEEeqnarray}{rCl}
	-\pi \leq \phi \leq \pi,\ -\pi \leq \psi \leq \pi, 
	-\frac{\pi}{2} \leq \theta \leq \frac{\pi}{2}.
\end{IEEEeqnarray}

With the aforementioned dynamic model and box constraints as well as the MIP-based collision avoidance constraints, the multi-agent MPC problem~\eqref{eq.mpc2} can be formulated.
\subsection{Simulation Implementation}
To solve the nonlinear MPC problem, several algorithms have been proposed, such as the iterative linear quadratic regulator~\cite{thomas2021receding} and differential dynamic programming~\cite{pavlov2021interior}. A more straightforward approach to implementing the nonlinear MPC controller is to perform a state-dependent coefficient (SDC) factorization~\cite{zhang2019hlt}. For each factorization method, a linear system representation can be eventually obtained regarding the current system state variables, and the obtained linear system model can be straightforwardly represented in the form of a state-space model. Subsequently, a linear MPC optimization problem is formulated in each time stamp, and the optimization problem can be effectively solved with the proposed algorithm. Since the system~\eqref{eq:model} is nonlinear, the state-dependent coefficient factorization method is used to deal with the nonlinear dynamics, which results in a linear time variant (LTV) state-space dynamic model. Since the RHC law is utilized, and the linear dynamic matrices $A$ and $B$ are dependent on the current state vector $x$, we can consider the system matrices to be constant during the prediction horizon.
\textcolor{black}{Additionally, all parameters pertinent to the system model, optimization problem, and the simulation setup are presented in Table~\ref{Table:parameter}.}
\begin{table}[htb!]
\centering
	\textcolor{black}{\caption{Simulation and Parameter Settings}}\label{Table:parameter}
\begin{tabular}{|l|l|}
\hline
\textbf{Description}                                                    & \textbf{Value}                                                                                                                                    \\ \hline
Simulation environment                                                  & Python 3.7                                                                                                                                        \\\hline
Processor                                                               & \begin{tabular}[c]{@{}l@{}}Intel(R) Xeon(R) CPU \\ E5-2695 v3 @ 2.30GHz\end{tabular}                                                              \\\hline
Prediction horizon                                                      & $T = 25$                                                                                                                                          \\\hline
Sampling time                                                           & $\Delta t = 0.05$ s                                                                                                                              \\\hline
Number of agents                                                        & $N = 21$                                                                                                                                          \\\hline
\begin{tabular}[c]{@{}l@{}}Lower bound of \\ control input\end{tabular} & \begin{tabular}[c]{@{}l@{}}$\delta {u}_{\min} = (-1.96\ \textup{N}, -1.96\ \textup{N},$ \\ \qquad \qquad \enspace $-1.96\ \textup{N}, -1.96\ \textup{N})$\end{tabular}     \\\hline
\begin{tabular}[c]{@{}l@{}}Upper bound of\\ control input\end{tabular}  & \begin{tabular}[c]{@{}l@{}}$\delta {u}_{\max} = (1.96\ \textup{N}, 1.96\ \textup{N},$ \\ \qquad \qquad \enspace $1.96\ \textup{N}, 1.96\ \textup{N})$\end{tabular}         \\\hline
Safety distance                                                         &  $d=0.2$ m\\\hline
Stopping criterion                                                      & $\hat \epsilon = 0.1$                                                                                                                             \\\hline
Weighting matrices                                                      & \begin{tabular}[c]{@{}l@{}}$Q_i = \operatorname{diag}(I_3,0_{9\times9})$; \\ $R_i = 0_{4\times4}$\end{tabular}                                    \\ \hline
\end{tabular}
\end{table}

\subsection{Simulation Result}
% \begin{figure}[t]
% 	\centering
% 	\includegraphics[width=0.9\linewidth]{Figure/Reference}
% 	\caption{Initial and final positions of the agents in the $yz$-plane.}
% 	\label{fig:reference}
% \end{figure}
\begin{figure}[t]
	\centering
	\includegraphics[trim=100 100 100 100, clip, width=0.85\linewidth]{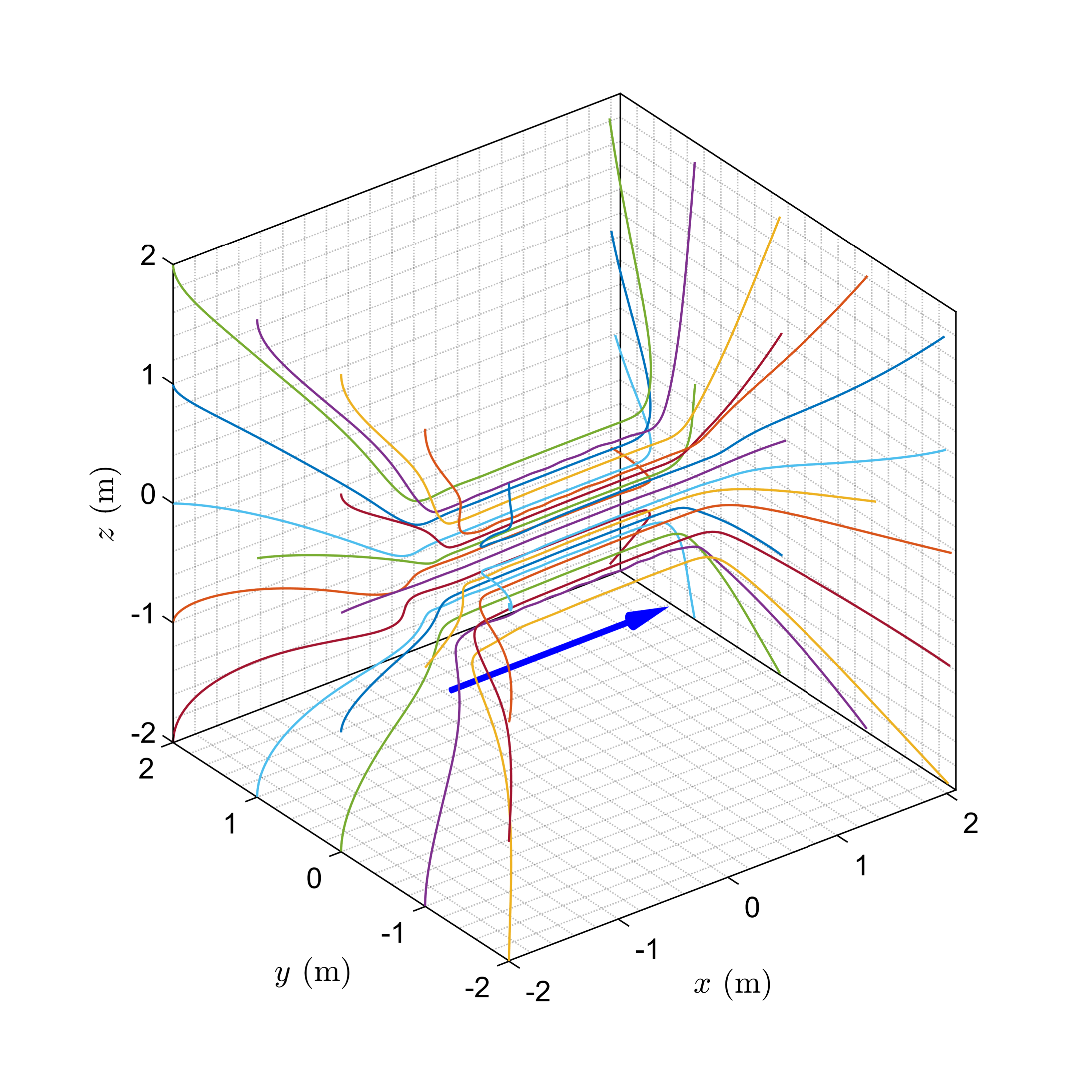}
	\caption{Trajectories of the agents in the 3D view.}
	\label{fig:Trajectory_3D}
\end{figure}
\begin{figure}[t]
	\centering
	\includegraphics[trim=100 20 100 50, clip, width=1\linewidth]{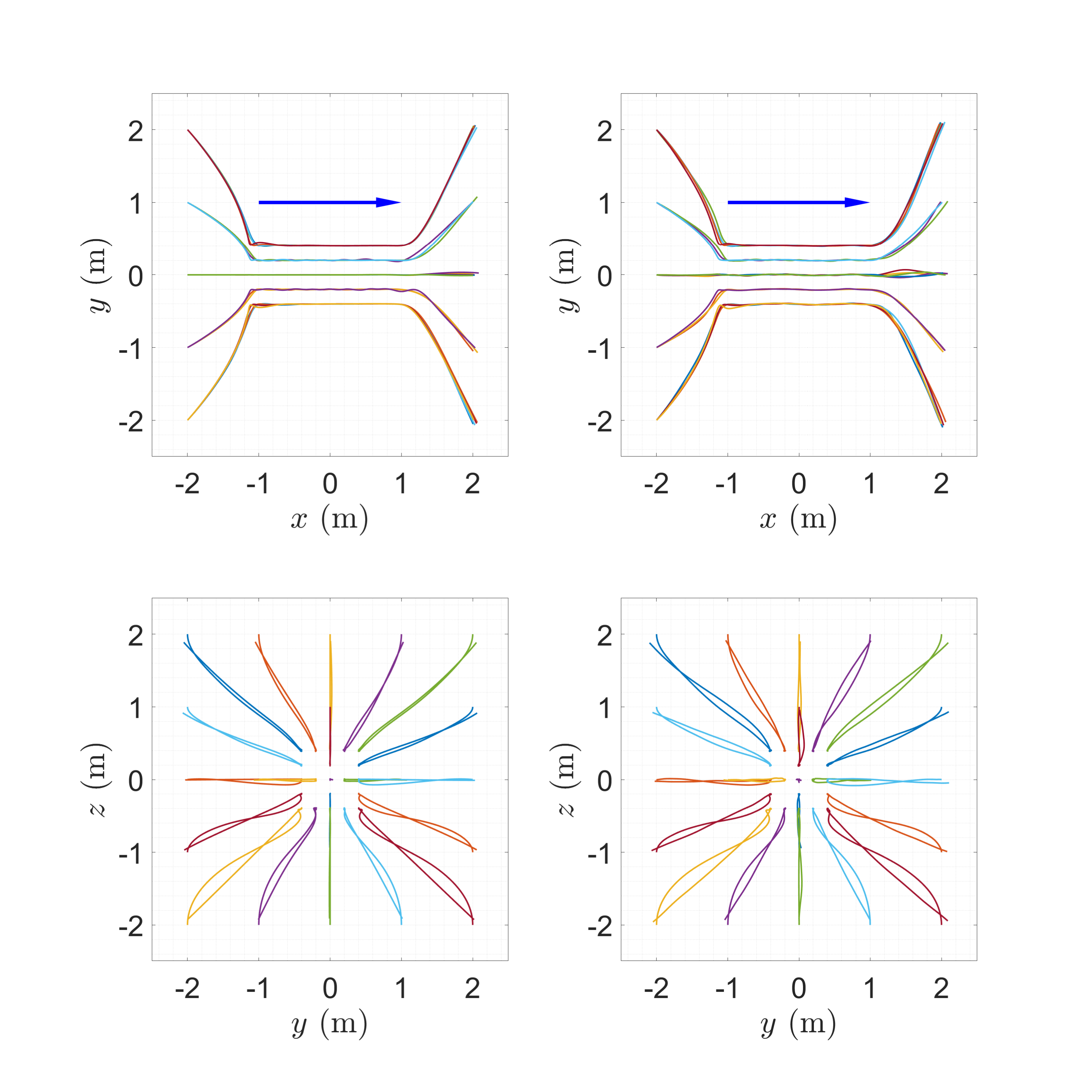}
	\caption{Trajectories of the agents in the 2D view without disturbance (left) and with disturbance (right).}
	\label{fig:Trajectory_2D}
\end{figure}

Before analyzing the simulation results, it is also worth noting the performance metrics in this work. Firstly, the trajectories of all agents are observed. As mentioned in Section~\ref{section.simulation_setup}, each agent has different initial and final set points, while all agents have same second and third set points. We can intuitively imagine that, it would be better if all agents are moving as close as possible to the set points at a safe distance. Therefore, to judge the performance of the proposed method, the trajectories of all agents are required to be observed. Secondly, as designed in this work, the box constraints of the control inputs of all agents are required to be satisfied. This is also an important performance metric. Thirdly, since the safe distance is set as $0.2$ m in this work, the distance between each pair of agents should be larger than 0.2 m all the time. Finally, the computation time is also one of the necessary performance metrics to judge the effectiveness and efficiency of the proposed method. {\color{black}Here, we use the other three methods as the comparative methods, which are chosen as 

% (a). Proposed method.

(a). Primal quadratic MIP (PQ-MIP) method: The dual optimization problem~\eqref{eq:dual_prob1} in our proposed method is replaced by the primal optimization problem~\eqref{eq:opt_second_pro_2}, which is solved by MOSEK~\cite{aps2019mosek}.

(b). Non-convex MIP (NC-MIP) method: The problem~\eqref{eq.mpc2} with MIP-based constraints are directly solved in a centralized manner by GUROBI~\cite{gurobi2021gurobi}, i.e., the ADMM-based parallel manner in our proposed method is replaced.

(c). Non-convex quadratically constrained quadratic programming (NC-QCQP) method: The problem~\eqref{eq.multi} with non-convex quadratic distance constraints are solved directly using the log barrier method by the non-convex solver CasADi~\cite{andersson2019casadi}.}

The purpose of this choise and the details of these methods will be introduced in the following content. 

In Fig.~\ref{fig:Trajectory_3D}, the trajectory of each agent is shown in the 3D view. It demonstrates that each agent can track the key point successfully and avoid the collision from each other. Fig.~\ref{fig:Trajectory_2D} presents the 2D views of the trajectory, where the left-top sub-figure shows the top view of the trajectory, and the left-bottom is the side view of the trajectory. Since the single UAV system is an underactuated and second-order nonholonomic system, it is difficult to precisely track the given reference signal with the existing box constraints. Therefore, from Fig.~\ref{fig:Trajectory_3D}, the trajectory of each agent is different from the given reference signal. Both Fig.~\ref{fig:Trajectory_3D} and Fig.~\ref{fig:Trajectory_2D} clearly illustrate the effectiveness of the proposed method.
%%BB \hl{
\textcolor{black}{Additionally, to further showcase the robustness of our proposed method against disturbances, the simulations are implemented under the aforementioned settings, particularly where the UAV models are assumed to be stochastic and contain external disturbance in the system inputs.
	This consideration holds significant practical relevance, further emphasizing the real-world applicability of our methodology.
	In such scenarios, it is expected that a satisfying trajectory can still be computed utilizing our algorithm.
	In the simulation, disturbances with Gaussian distribution (variance is set to be 0.1) are introduced into each system input.
	The results are also depicted in the two figures on the right-hand side of Fig.~{\ref{fig:Trajectory_2D}}.
	It can be evidently shown that the resulting collision-free trajectory can be slightly disturbed, yet all generated trajectories remain satisfying.
	Thus, the simulation results sufficiently demonstrate the robustness of our proposed methodology.}
%%BB }

\begin{figure}[t]
	\centering
	\includegraphics[trim=50 150 100 150, clip, width=0.9\linewidth]{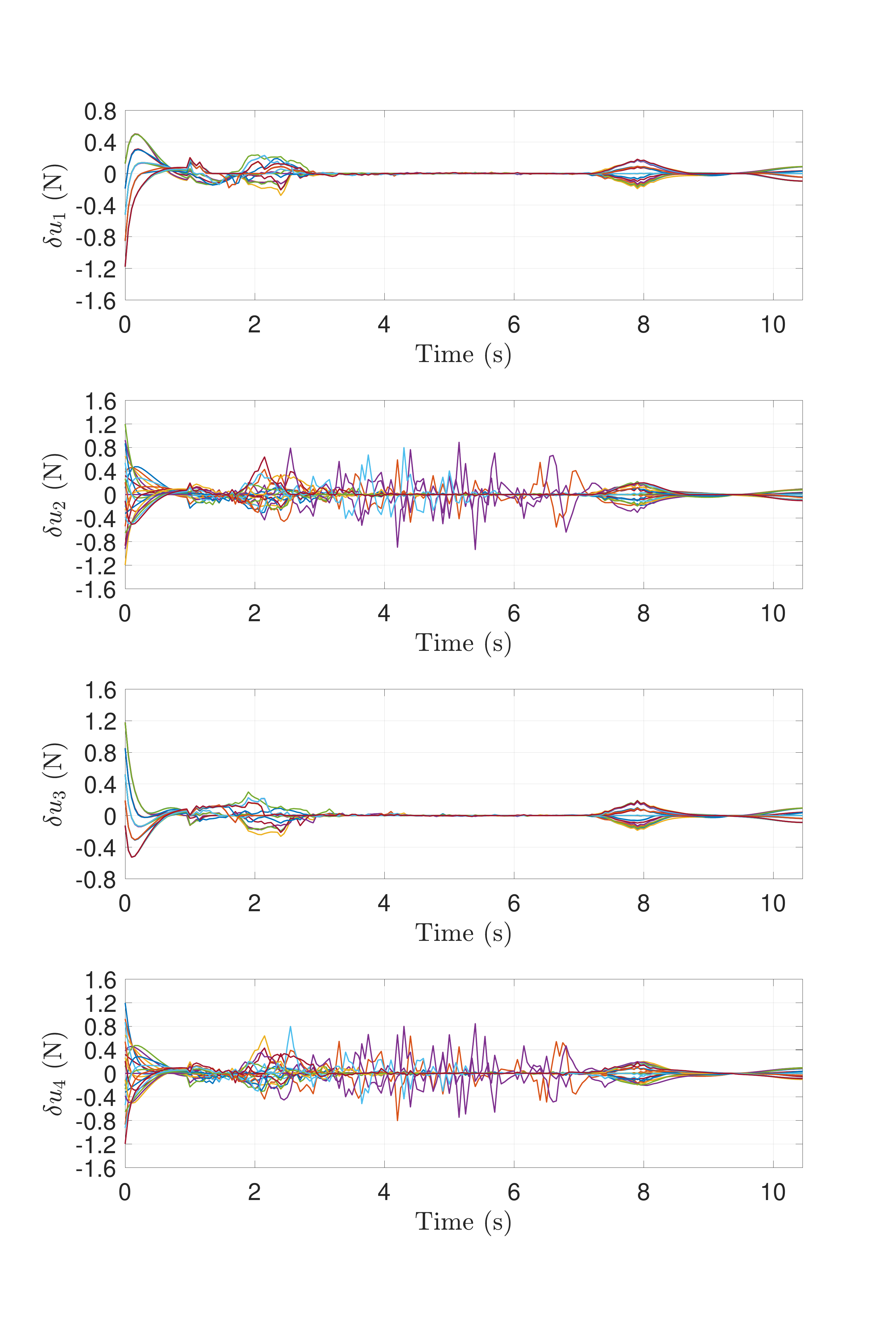}
	\caption{Control inputs of the agents.}
	\label{fig:Control_Inputs}
\end{figure}
To validate the satisfaction of the box constraints, the control input variables for each agent are depicted in Fig.~\ref{fig:Control_Inputs}, where $\delta u_1,\delta u_2,\delta u_3$, and $\delta u_4$ are the control input signals for the four rotors of each agent, respectively. It is clear to see that all control inputs are restricted to the feasible specific range.

\begin{figure}[t]
	\centering
	\includegraphics[trim=70 20 100 50, clip, width=0.8\linewidth]{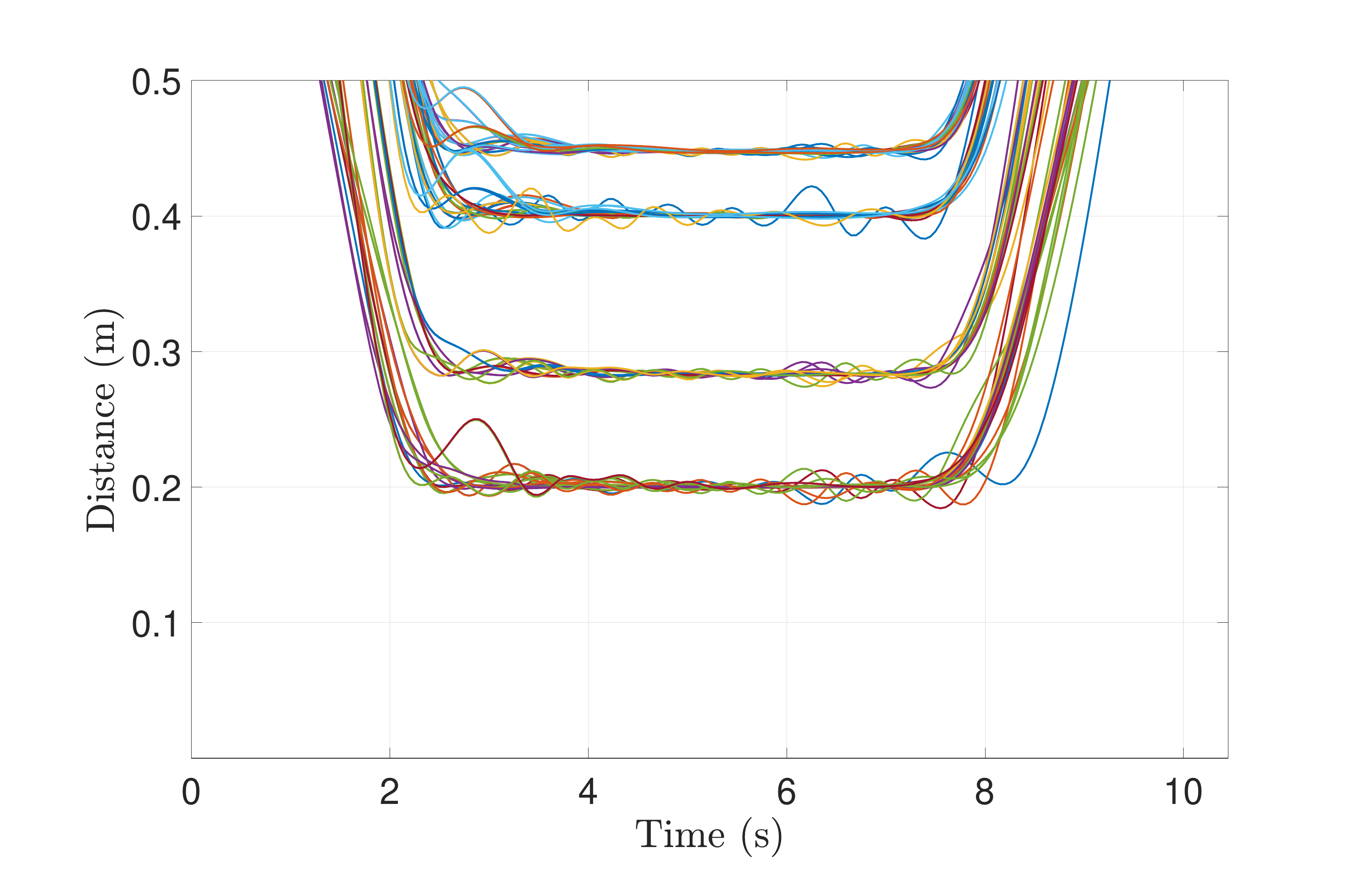}
	\caption{Distances between each pair of agents.}
	\label{fig:Distance}
\end{figure}

Next, the performance of safe distance is analyzed. In Fig.~\ref{fig:Distance}, the distances between each pair of agents are shown. From this figure, we can see that, the safe distances are satisfied at most of time. However, it also shows that the safe distances are slightly violated sometimes. The reason of this phenomenon is that, the stopping criterion we use in this example cannot guarantee the high precision of the solution. As aforementioned, because of the iterative computation manner of ADMM, looser stopping criterion leads to less iterations; as a result, the computation time is reduced while the precision of the solution is diminished. Therefore, the choice of the stopping criterion balances the optimization efficiency and the optimization precision. As a solution, we can reduce the violation by using a more strict stopping criterion. Alternatively, we can choose a larger safe distance to deal with this problem. To show the influence of the stopping criterion, we have performed additional simulations. In the additional simulation, the stopping criterion is chosen as $\hat \epsilon=0.05$, which is half of the original one. Fig.~\ref{fig:Distance2} shows the distances between each pair of agents with the more strict stopping criterion. Compared to Fig.~\ref{fig:Distance}, Fig.~\ref{fig:Distance2} clearly shows that the violation of the safe distance is reduced when a more strict stopping criterion is chosen. \textcolor{black}{ Noted that, all the pertinent simulation videos are
accessible at \protect\url{https://youtu.be/T0C6-W92KXU}.}

\begin{figure}[t]
	\centering
	\includegraphics[trim=70 20 100 50, clip, width=0.8\linewidth]{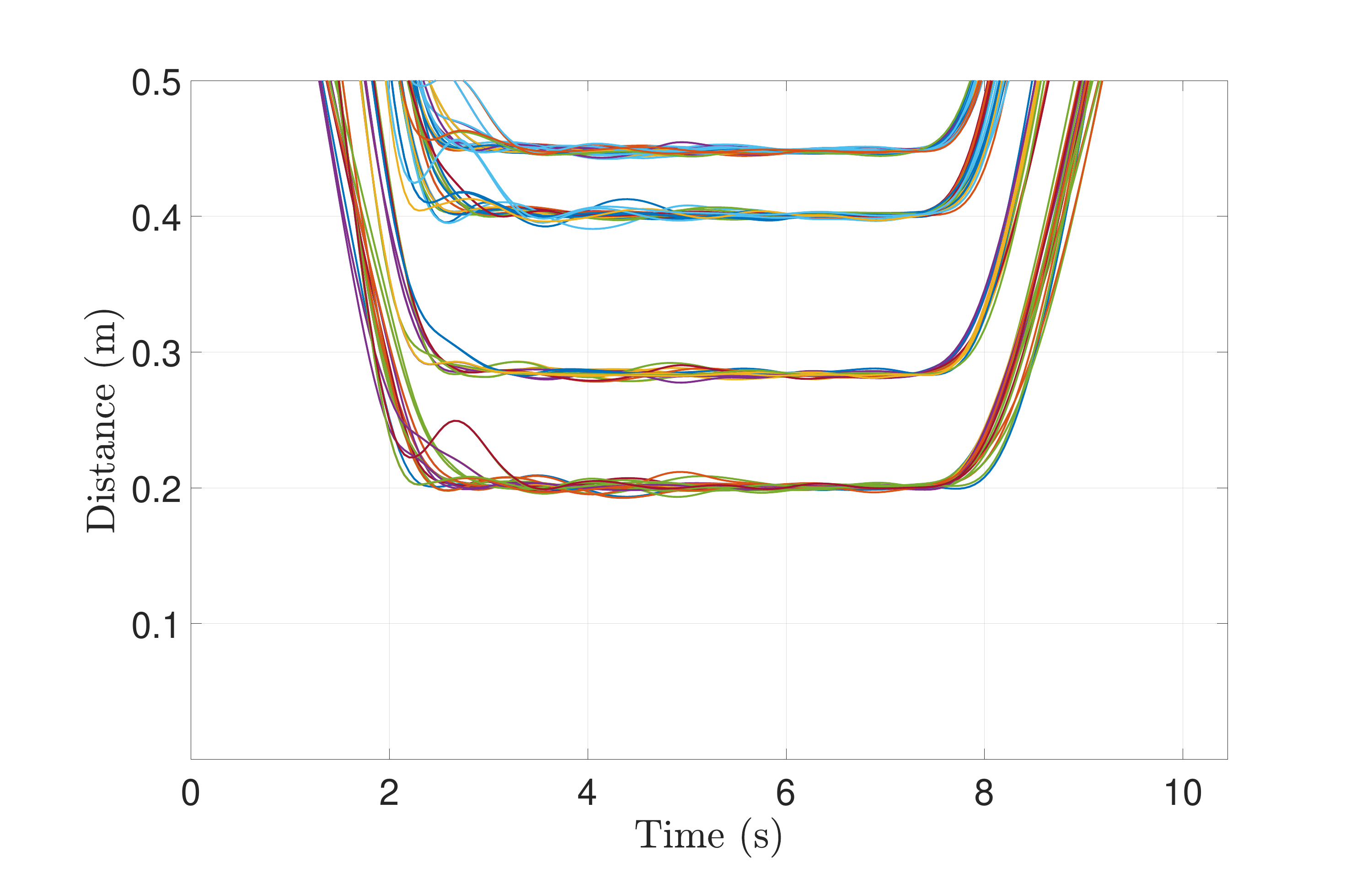}
	\caption{Distances between each pair of agents with the smaller duality gap.}
	\label{fig:Distance2}
\end{figure}

\begin{table}
	\caption{Results of Comparisons (Computation Time)}\label{Table:compare}
\begin{tabular}{|c|c|c|}
	\hline
	& \begin{tabular}[c]{@{}c@{}}Computation Time \\ per Time Stamp (s)\end{tabular} & \multicolumn{1}{c|}{\begin{tabular}[c]{@{}c@{}}Computation Time \\ for ADMM First Iteartion (s)\end{tabular}} \\ \hline
	Proposed Method &            43.4681         &      2.1558             \\ \hline
	PQ-MIP          &            88.6464         &      5.4371             \\ \hline
	NC-MIP          &            181.8962        &      N.A.               \\ \hline
	NC-QCQP         &            441.7536        &      N.A.               \\ \hline
\end{tabular}
\end{table}

% \begin{table}
% 	\caption{\textcolor{black}{Results of Comparisons (Success Rate)}}\label{Table:compare_rate}
% \begin{tabular}{|c|c|c|c|}
% 	\hline
% 	& \begin{tabular}[c]{@{}c@{}} Success Rate\\ \end{tabular} & \multicolumn{1}{c|}{\begin{tabular}[c]{@{}c@{}}Max Violation  \\ (m)\end{tabular}} & \multicolumn{1}{c|}{\begin{tabular}[c]{@{}c@{}} RMS of Violation  \\ (m)\end{tabular}} \\ \hline
% 	Proposed Method &            98.31\%         &      0.0156       &  0.0031    \\ \hline
% 	PQ-MIP          &            98.17\%         &      0.0132       &  0.0032    \\ \hline
% 	NC-MIP          &            98.20\%        &     0.0231         &  0.0034    \\ \hline
% 	NC-QCQP         &            100\%        &      0         &    0  \\ \hline
% \end{tabular}
% \end{table}

\begin{table}
	\caption{\textcolor{black}{Results of Comparisons (Target Positioning Performance)}}\label{Table:compare_motion}
\begin{tabular}{|c|c|c|c|}
	\hline
	& \begin{tabular}[c]{@{}c@{}} Max Endpoint\\ Error  (m) \end{tabular} & \multicolumn{1}{c|}{\begin{tabular}[c]{@{}c@{}}Max Endpoint\\ Error  (m)\end{tabular}} & \multicolumn{1}{c|}{\begin{tabular}[c]{@{}c@{}} Average Endpoint\\ Error  (m)\end{tabular}} \\ \hline
	Proposed Method &             0.1458      &        0.0141     &   0.0848   \\ \hline
	PQ-MIP          &             0.1540        &      0.0675       &   0.0977   \\ \hline
	NC-MIP          &            0.1266       &     0.0661         &  0.0916    \\ \hline
	NC-QCQP         &            73.4317        &      0.4236         &    12.9562  \\ \hline
\end{tabular}
\end{table}

Besides, the computation efficiency of our proposed method is illustrated. Table~\ref{Table:compare} documents the comparisons of computational efficiency using different optimization approaches. Here, we use two indices to judge the computational efficiency, the computational time per time stamp and the computational time for the first ADMM iteration are given. 
It is pertinent to note that, only the proposed method and the PQ-MIP method have the ADMM structure.
\textcolor{black}{The results, as presented in Table II, demonstrate that our proposed method achieves significant improvement in computation efficiency compared to the baselines. Specifically, it reduces computation time for each time stamp by 50.96\%, 76.10\%, and 90.16\% compared to PQ-MIP, NC-MIP, and NC-QCQP, respectively.}
Notice that the computational efficiency of our proposed method is nearly two times higher than the PQ-MIP method, which indicates that solving the dual optimization problem~\eqref{eq:dual_prob1} is more efficient than solving the primal optimization problems~\eqref{eq:opt_second_pro_2}. Additionally, comparing the computation time for ADMM first iteration between these two methods, we can also see that our proposed method is more efficient than the PQ-MIP method. Besides, compared to NC-MIP method, we can conclude that using the ADMM algorithm and the parallel manner improves the efficiency four times than directly solving the non-convex MIP problem. Moreover, comparing between NC-MIP and NC-QCQP, we can infer that using MIP collision avoidance constraints also significantly enhances the computation efficiency. Therefore, as clearly observed from Table~\ref{Table:compare}, our proposed method demonstrates significant superiority over other methods in terms of computational efficiency. \textcolor{black}{Additionally, to further compare our proposed methods with the other comparative methods, the success rate and the motion performance are illustrated and analyzed. For the success rate, it is found that the success rates of all the methods are 100\% after several trials. Then, considering the motion performance as shown in Table~\ref{Table:compare_motion}, it can be seen that the actual endpoint of our proposed method is closer to the target final position. It is also pertinent to note that, the target positioning error of NC-QCQP is significantly large. In fact, the trajectory calculated by NC-QCQP method is actually extremely far from the designed points in this task. One of the main reasons is that, by using log barrier in NC-QCQP, only the constraints in the optimization problem could be satisfied; however, it is unable to guarantee the optimality of the performance (equivalent to the cost value of the corresponding cost function). In conclusion, it can be demonstrated that, our proposed method can achieve higher performance with shorter computation time.}    

\subsection{Discussion on Limitations and Future Works}

After the analysis of the simulation results, it is also pertinent to discuss on the limitations of our proposed method. Since solving the MPC problem highly relies on the accuracy of the model information, the existence of significant model uncertainty, external disturbance, and measurement noise could cause downgrade of the overall performance or even the occurrence of control failures in real-world applications. As a possible solution, the proposed algorithm can be easily combined with several well-known and practical methods to improve the robustness in the scenario with significant disturbance. For example, in the case where the measurement noise is significant, several filters such as the Kalman filter can remarkably improve the measurement accuracy, especially in the case where the Kalman compensator can be modeled and identified at the sensor level. For more sophisticated cases, different kinds of disturbance observers (DOB) [2] can be straightforwardly included to address the disturbance. 
%%BB
%%BB
Besides, it is pertinent to note that the potential of the proposed method can be further deployed advantageously to multi-agent systems with larger scale.
Therefore, as also part of our future work, 
we will further investigate and research into 
developments involving fast and efficient 
numerical
computations
(likely involving also up-scale and much more powerful multi-processing hardware);
and extend the proposed method to deal with large-scale multi-agent systems.
%%BB

\section{Conclusion}
\textcolor{black}{In this work, the collision-free control problem for multi-agent systems is investigated. }
%In this work, the multi-agent collision-free control problem for medium and large-scale systems is investigated.
To parallelize the resulting proposed optimization problem in the model predictive collision-free control scheme, the ADMM technique is adroitly utilized to decompose the global problem into sub-problems. Specifically, in the first ADMM iteration, the dual problem of the optimization problem is formulated to enhance the optimization efficiency; and in the second ADMM iteration, a non-convex mixed integer quadratic programming problem is formulated and an intuitive initialization method is presented to accelerate the computational process. Furthermore, rigorous proofs pertaining to the convergence of the proposed non-convex ADMM iterations are furnished. Finally, a multi-agent system with a group of unmanned aerial vehicles (UAVs) is given as an illustrative example to validate the effectiveness of the proposed method. Comparison results clearly demonstrate the promising effectiveness of the proposed methodologies here (involving an innovative parallel computation framework which is constructed, and which facilitates the deployment of advanced optimization methods to solve the multi-agent collision-free control problem).

\bibliographystyle{IEEEtran}
\bibliography{IEEEabrv,Reference}

\end{document}